
\documentclass[letterpaper, 10 pt, conference]{ieeeconf}  

\IEEEoverridecommandlockouts                              
\overrideIEEEmargins


\usepackage{amsmath} 
\usepackage{amsthm}

\usepackage{amsfonts}
\usepackage{amssymb}  
\usepackage{mathtools} 
\usepackage{mathrsfs}
\usepackage{balance}
\usepackage{url}
\usepackage{hyperref}
\DeclareMathAlphabet{\mathpzc}{OT1}{pzc}{m}{it}

\newcommand{\R}{\mathbb{R}}

\theoremstyle{definition}
\newtheorem{proposition}{Proposition}
\newtheorem{definition}{Definition}

\newtheorem{lemma}{Lemma}
\newtheorem{theorem}{Theorem}
\newtheorem{remark}{Remark}

\usepackage{cleveref}

\usepackage[ruled,linesnumbered]{algorithm2e}

\DeclareMathOperator*{\argmin}{arg\,min}
\DeclareMathOperator*{\argmax}{arg\,max}


\newcommand{\doi}[1]{\href{http://dx.doi.org/#1}{\normalsize{\textsc{doi:}}~\nolinkurl{#1}}}
\newcommand{\arxiv}[1]{\href{http://arxiv.org/abs/#1}{\normalsize{\textsc{arxiv:}}~\nolinkurl{#1}}}
\usepackage{color}

\usepackage{comment}
\newcommand{\HRule}{\noindent\rule{\linewidth}{0.1mm}\newline}

\renewcommand{\epsilon}{\varepsilon}
\renewcommand{\phi}{\varphi}


\newcommand{\LL}{\mathcal{L}}
\newcommand{\Safe}{\mathcal{S}}
\newcommand{\traj}{\psi}
\newcommand{\trajsimple}{\mathrm{x}}

\usepackage[dvipsnames]{xcolor}

\newcommand{\shnote}[1]%
    {\textcolor{magenta}{ #1}}
\newcommand{\cbf}{B_{\gamma}}
\newcommand{\strategy}{\xi_{d}}
\newcommand{\Strategy}{\Xi}
\newcommand{\UU}{\mathcal{U}}
\newcommand{\tpdelta}{t\!+\!\delta}
\newcommand{\topdelta}{t_{0}\!+\!\delta}
\newcommand{\xntn}{x_{0},t_{0}}
\newcommand{\tT}{t}

\newcommand{\ctrl}{u}
\newcommand{\dstb}{d}
\newcommand{\state}{\trajsimple}
\newcommand{\CBVF}{CBVF}

\newcommand{\cset}{U}
\newcommand{\cfset}{\UU}
\newcommand{\dset}{D}
\newcommand{\dfset}{\mathcal{D}}
\newcommand{\lf}{l}
\newcommand{\VV}{V}
\newcommand{\JJ}{J}
\newcommand{\BB}{B}
\newcommand{\BV}{B}
\newcommand{\BVg}{\BV_{\gamma}}

\title{\LARGE \bf
Robust Control Barrier--Value Functions for Safety-Critical Control
}

\author{Jason J. Choi, Donggun Lee, Koushil Sreenath, Claire J. Tomlin, and Sylvia L. Herbert
\thanks{This research is supported in part by the DARPA Assured Autonomy program, National Science Foundation Grant CMMI-1931853 and UCSD. Jason Choi, Donggun Lee, Koushil Sreenath, and Claire Tomlin are with University of California, Berkeley, and Sylvia Herbert is with University of California, San Diego. Contact info: \{jason.choi, donggun\_lee, koushils, tomlin\}@berkeley.edu,
sherbert@ucsd.edu}%
}

\begin{document}

\renewcommand{\baselinestretch}{1}
\maketitle
\thispagestyle{empty}
\pagestyle{empty}

\begin{abstract}
This paper works towards unifying two popular approaches in the safety control community: Hamilton-Jacobi (HJ) reachability and Control Barrier Functions (CBFs). HJ Reachability has methods for direct construction of value functions that provide safety guarantees and safe controllers, however the online implementation can be overly conservative and/or rely on chattering bang-bang control. The CBF community has methods for safe-guarding controllers in the form of point-wise optimization using quadratic programs (CBF-QP), where the CBF-based safety certificate is used as a constraint. However, finding a valid CBF for a general dynamical system is challenging. This paper unifies these two methods by introducing a new reachability formulation inspired by the structure of CBFs to construct a Control Barrier-Value Function (CBVF). We verify that CBVF is a viscosity solution to a novel Hamilton-Jacobi-Isaacs Variational Inequality and preserves the same safety guarantee as the original reachability formulation. Finally, inspired by the CBF-QP, we propose a QP-based online control synthesis for systems affine in control and disturbance, whose solution is always the CBVF's optimal control signal robust to bounded disturbance. We demonstrate the benefit of using the CBVFs for double-integrator and Dubins car systems by comparing it to previous methods.
\end{abstract}

\section{Introduction}
\label{sec:intro}

\subsection{Motivation \& Related Work}
\label{subsec:motivation}
Value function-based approaches are common techniques for solving safe control problems. Two such methods are Hamilton-Jacobi (HJ) reachability analysis and Control Barrier Functions (CBFs).
HJ reachability analysis formulates the reachability of a target set as an optimal control problem, and has long been used as a formal theoretical tool for safety analysis and synthesis of safe controllers \cite{lygeros2004reachability, mitchell2005}. 
HJ reachability-based value functions can be solved numerically by using the dynamic programming principle \cite{ian2005levelset}. The zero-superlevel set of the value function describes the safe set, and the optimal safety controller can be synthesized based on the gradient of the function. 
Moreover, the safe control can be robust to disturbances \cite{mitchell2005}. 

The main drawbacks of HJ reachability analysis are twofold. First, although there have been recent advances to improve computational efficiency \cite{herbert2019reachability,bansal2021deepreach, decomposition}, most numerical methods to construct the value function suffer from the curse of dimensionality \cite{hjreachabilityoverview}. 
Secondly, the resulting safe optimal control policy is generally overly conservative when applied directly.
A popular remedy for reducing conservativeness is to use a least-restrictive hybrid controller--the optimal control is only applied when the system is very close to the safe boundary. 
However, this switching-based control law often results in undesirable jerky behaviors.

\begin{figure}
\centering
\includegraphics[width=0.9\columnwidth]{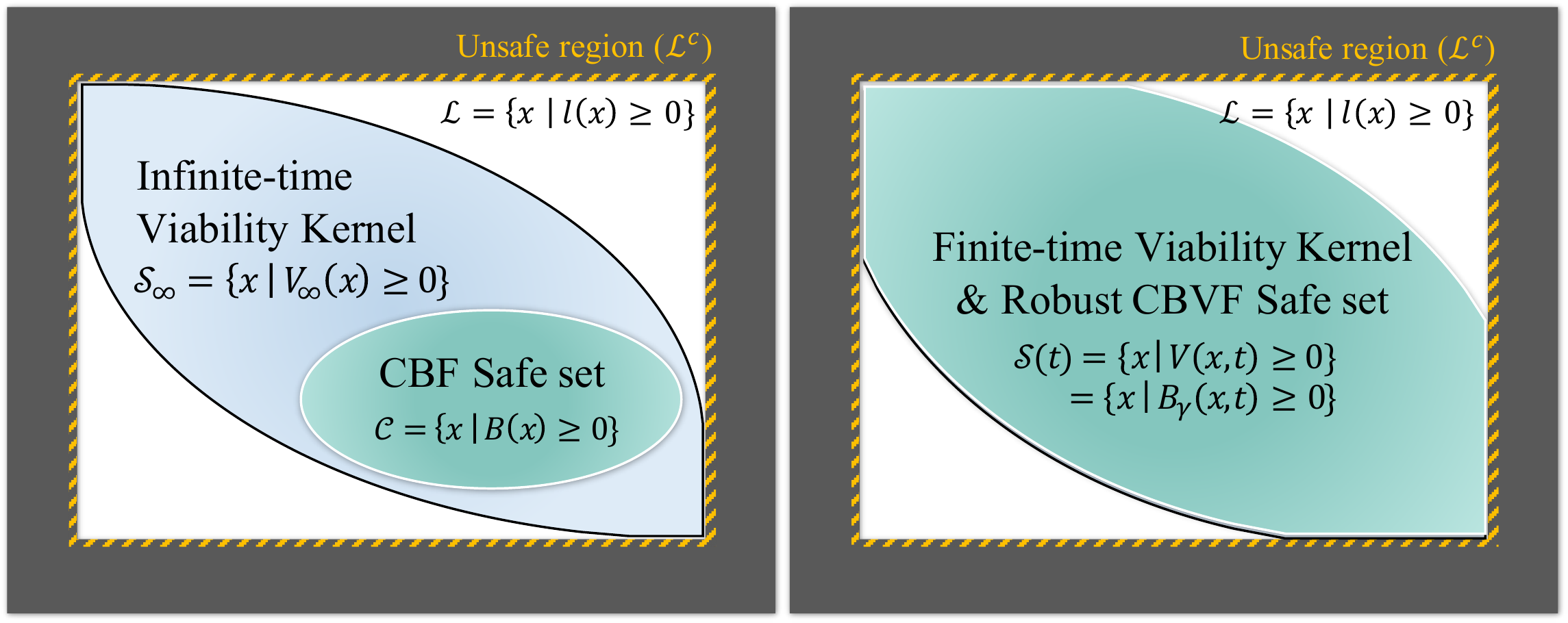}
\vspace{-.8em}
\caption{Illustrative diagram of the viability kernel and the zero-superlevel sets of the functions presented in the paper. The gray region represents the failure set that should not be entered. The constraint set $\LL$ is seen in white. On the left, the infinite-time viability kernel generated by HJ reachability is shown in blue, with the CBF safe set being a guaranteed under-approximation. On the right, the finite-time viability kernel is generally larger than the infinite-time version, as there exists more states that can be safe for only a finite time horizon. The zero-superlevel set of the proposed Robust \CBVF~$\cbf$ matches the viability kernel, the maximal robust safe set.}
\label{fig:main-diagram}
\vspace{-1.8em}
\end{figure}

Recently, Control Barrier Functions (CBFs) have gained popularity among the controls and robotics community as a convenient way of solving safe control problems \cite{WIELAND2007cbf, cbfqptac}. CBFs are Lyapunov-like functions that impose certain state-dependent constraints on the control input.
The constraint results in control invariance of the zero-superlevel set of CBFs, and this property can be used to ensure that the system stays within a desired safe region.
The main benefit of using a CBF for safety control is that for control-affine systems, the CBF constraint can be incorporated in an online min-norm optimization based controller, namely the CBF-based Quadratic Program (CBF-QP). The fact that this controller can be applied in real-time for high-dimensional systems makes it attractive for many applications \cite{nguyen2016optimal, wang17multirobot, squries2018constructive}.
Also, it can be used as an automatic safety filter (as opposed to using least-restrictive control) \cite{active_set_invariance, cheng2019rlcbf, taylor20learningcbf}.

The main drawback of CBF-based approaches is that they lack general methods of constructing a valid CBF, which results in using hand-designed or application-specific CBFs \cite{nguyen2016optimal, squries2018constructive}. 
This may restrict the system to stay only in a conservative safe region defined by a CBF's zero-superlevel set. 
Another problem arises when the system has control input bounds: the CBF may be invalid under these bounds, causing the CBF-QP to become infeasible anytime. A new QP formulation proposed recently provides pointwise feasibility but not persistent feasibility \cite{zeng2021feasibility}.

In summary, HJ reachability analysis and CBF-based safety control are complementary in many ways. HJ reachability provides constructive methods for the value functions, whereas the CBF community usually has to deal with handcrafting a valid CBF. Also, HJ-based value functions result in the maximal safe region for a desired safety constraint, whereas CBFs often can only provide a conservative estimate of safe region. On the other hand, online CBF-based safety controllers like the CBF-QP are a powerful tool to apply CBFs to high-dimensions systems in real-time applications, whereas HJ reachability suffers from the curse of dimensionality and its value function's online deployment is not straightforward due to the optimal policy's restrictive behavior. A recent paper uses HJ reachability functions as CBFs \cite{active_set_invariance}, but a theoretical understanding of the relationship between the two methods is still lacking.\vspace{-.5em}

\subsection{Paper Organization and Contributions}
\label{subsec:contributions}
In light of the fact that reachability-based value functions and CBFs are tackling a similar problem in complementary ways, we unify the two functions theoretically. First, in Sec.~\ref{sec:background} we briefly summarize and compare the concept of a value function from HJ reachability and CBFs.

In Sec.~\ref{sec:construction} we introduce the notion of a \textit{Robust Control Barrier-Value function} (\CBVF) that merges reachability-based value functions and CBFs into one function. This function (a) can be used for finite-time safety guarantees, (b) is robust to bounded disturbances, (c) recovers the maximal safe set for a desired safety constraint, and (d) leads to a safety control that satisfies the control bound everywhere inside the safe set. The main theoretical contribution is a verification that the CBVF is a viscosity solution of a particular Hamilton-Jacobi-Isaacs variational inequality (HJI-VI), and this can be used to numerically construct a valid CBVF. This constructive method does not naturally scale well, but can benefit from methods from the reachability community that enhance scalability \cite{herbert2019reachability, bansal2021deepreach, herbert2021safelearning}.

In Sec.~\ref{sec:onlinecontrol}, we introduce the optimal control policy corresponding to the CBVF. This controller is less conservative than that from the original HJ reachability, and is less jerky than using the least-restrictive controller that is commonly applied in HJ reachability. For systems affine in control and disturbance, we show that such an optimal controller can be obtained by solving a QP, namely the \textit{Robust \CBVF-QP}. In Sec.~\ref{sec:simulations}, we demonstrate this findings on numerical examples by comparing the CBVF-based safety control with the original HJ reachability and CBF-based methods.

\section{Background}

\label{sec:background}
\subsection{Problem Formulation}
\label{subsec:problem}

Consider a state trajectory of the continuous-time time-invariant controlled system with disturbance, solving
\begin{equation}
\label{eq:system}
    \dot{\trajsimple}(s) = f(\trajsimple(s), \ctrl(s), \dstb(s)),~ s\in[t,t'], \quad \text{and } \trajsimple(t)=x,
\end{equation}
where $t$ and $x$ are the initial time and state, respectively.
$\ctrl \in \cset \subset \R^{m}$ is the control input, $\dstb \in \dset \subset \R^{w}$ is the disturbance where $\cset$, $\dset$ are compact and convex sets, and $f:\R^n \times \cset \times \dset \rightarrow \R^n$ is Lipschitz continuous in the state and bounded. Let $\cfset_{[t,t']}$, $\dfset_{[t,t']}$ be a set of Lebesgue measurable functions from the time interval $[t, t']$ to $\cset$ and $\dset$, respectively. For simplicity, we set the final time as 0. For every initial time $t \le 0$, initial state $x\in\R^n$, $\ctrl(\cdot)\in\cfset_{[t,0]}$, and $\dstb(\cdot)\in\dfset_{[t,0]}$, system \eqref{eq:system} admits a unique solution trajectory. We denote this \textit{trajectory} as $\state(s)$, and will say that ``$\state(\cdot)$ solves \eqref{eq:system} for $(x, t, \ctrl, \dstb)$'' with a slight abuse of notation.

Throughout the paper, we assume that the disturbance signal $\dstb(\cdot)$ can be determined in reaction to the control signal in a form of a strategy $\strategy:\cfset_{[t,0]}\rightarrow\dfset_{[t,0]}$. However, we restrict it to draw only from \textit{nonanticipative strategies} with respect to $\ctrl(\cdot)$, denoted as $\strategy \in \Strategy_{[t,0]}$. 
The nonanticipative strategy prohibits the use of future information of the control signal to make a decision of the disturbance at each time \cite{evans_hj}.

Now consider a set $\LL$ defined as a zero-superlevel set of a bounded Lipschitz continuous function $\lf: \R^{n} \to \R$:\vspace{-.5em}
\begin{equation}
\label{eq:safeset}
\LL = \left\{x : \lf(x) \geq 0 \right\}.
\vspace{-.5em}
\end{equation}
\noindent The objective of the safety control is to guarantee the trajectory to stay in $\LL$ for $s \in [\tT, 0]$ under the worst case disturbance.
We refer to $\lf(x)$ as the \textit{safety target function}. More formally, we are interested in the following problems:

\noindent \textbullet \; \textbf{Computing the viability kernel $\Safe(t)$ \cite{lygeros2004reachability} for $\LL$:} Verify $\Safe(\tT):=\{ x \in \LL:  \forall \strategy \in \Strategy_{[\tT,0]}, \exists \ctrl(\cdot)\in\cfset_{[\tT, 0]}\; \text{s.t.}\;\forall s \in [\tT, 0], \state(s) \in \LL$
where $\state(s)$ solves \eqref{eq:system} for $(x, \tT, \ctrl, \strategy) \}$ for $\tT<0$. $\Safe(\tT)$ is the set of all the initial states at time $t$ in $\LL$ from which there exists an admissible control signal that keeps the system safe under the worst-case disturbance.

\noindent \textbullet \; \textbf{Computing a robust safe control $u(\cdot)$ for $\LL$:} For each $x \in \Safe(\tT)$, verify a control signal $\ctrl(\cdot) \in \cfset_{[\tT, 0]}$ that renders the trajectory safe for $s \in [\tT, 0]$, under the worst-case disturbance. \vspace{-.7em}

\subsection{Hamilton-Jacobi Reachability Analysis}
\label{subsec:HJReachability}
It has been verified that solving for the viability kernel and the robust safe control signal can be posed as an optimal control problem, which can be solved using HJ reachability analysis \cite{lygeros2004reachability, mitchell2005, fisac2015reach}. First, we define a cost function as\vspace{-.5em}
\begin{equation}
\label{eq:cost_brt}
    \JJ(x, t, \ctrl(\cdot), \dstb(\cdot)) := \min_{s \in [t, 0]} \lf(\state(s)),\vspace{-.5em}
\end{equation}
which captures the minimal value of $\lf(\cdot)$ along the trajectory $\state(\cdot)$ that solves \eqref{eq:system} for $(x, t, \ctrl, \dstb)$. If $\exists s \in [t, 0]$ such that $J(x, t, \ctrl(s), \dstb(s)) <0$, it means that the trajectory was violating the safety constraint at some point in the time horizon (obtaining a negative value of $\lf$), and is therefore unsafe. The objective of the safety control is to make $\JJ$ as big as possible, whereas under the worst case, the disturbance would act in a direction of decreasing $\JJ$ as much as it can. Based on this, we can define the value function $\VV:\R^n \times (-\infty, 0] \rightarrow \R$ as\vspace{-.5em}
\begin{equation}
\label{eq:value_brt}
\VV(x, t) := \min_{\strategy\in\Strategy_{[t,0]}} \max_{\ctrl \in \cfset_{[t,0]}} \JJ(x, t, \ctrl(\cdot), \strategy[\ctrl](\cdot)), \vspace{-.5em}
\end{equation}
Then, by the following proposition, the viability kernel for $\LL$ is $\Safe(\tT) = \{x \in \R^n : \VV(x, \tT) \geq 0\}$. Note that the minimum and maximum in $\Strategy_{[t,0]}, \cfset_{[t,0]}$ always exists because $U$ and $D$ are compact and convex \cite{altarovici2013general}.
 
\begin{proposition}
\label{prop:viability_kernel} For all $t\le 0$, the viability kernel for $\LL$, $\Safe(t)$, always is $\{x \in \R^n : \VV(x, t) \geq 0\}$.
\end{proposition}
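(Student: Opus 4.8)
The plan is to prove the set equality by establishing the two inclusions, each by a direct unpacking of the definitions of $\Safe(t)$, of $\VV$ in \eqref{eq:value_brt}, of $\JJ$ in \eqref{eq:cost_brt}, and of $\LL$ in \eqref{eq:safeset}. The key structural observation is that both the definition of $\Safe(t)$ and the outer minimization in $\VV$ treat the nonanticipative strategy $\strategy$ as the \emph{leading} quantifier, so no quantifier exchange is ever required; the nonanticipativity restriction enters only through the fact that $\min_{\strategy\in\Strategy_{[t,0]}}$ is the correct worst-case notion against nonanticipative disturbances, and I would not need to say anything further about it.

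First I would record the elementary bound $\JJ(x,t,\ctrl(\cdot),\dstb(\cdot))\le \lf(x)$, valid for every admissible $\ctrl(\cdot),\dstb(\cdot)$, which holds because the minimum in \eqref{eq:cost_brt} is over $s\in[t,0]$ and in particular includes $s=t$, where $\state(t)=x$. Taking $\max_\ctrl$ and then $\min_\strategy$ gives $\VV(x,t)\le\lf(x)$, so $\VV(x,t)\ge0$ already forces $x\in\LL$; this disposes of the explicit membership requirement ``$x\in\LL$'' in the definition of $\Safe(t)$ for free.

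For the inclusion $\{x:\VV(x,t)\ge0\}\subseteq\Safe(t)$, I would assume $\VV(x,t)\ge0$ and fix an arbitrary $\strategy\in\Strategy_{[t,0]}$. From $\VV(x,t)=\min_{\strategy'}\max_\ctrl\JJ\ge0$ it follows that $\max_{\ctrl\in\cfset_{[t,0]}}\JJ(x,t,\ctrl(\cdot),\strategy[\ctrl](\cdot))\ge0$; invoking the attainment of this maximum (already noted below the definition of $\VV$, using compactness and convexity of $\cset$ and $\dset$) yields a control $\ctrl^\star(\cdot)\in\cfset_{[t,0]}$ with $\min_{s\in[t,0]}\lf(\state(s))=\JJ(x,t,\ctrl^\star(\cdot),\strategy[\ctrl^\star](\cdot))\ge0$, where $\state(\cdot)$ solves \eqref{eq:system} for $(x,t,\ctrl^\star,\strategy[\ctrl^\star])$. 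Hence $\state(s)\in\LL$ for every $s\in[t,0]$, and since $\strategy$ was arbitrary, $x\in\Safe(t)$. For the reverse inclusion $\Safe(t)\subseteq\{x:\VV(x,t)\ge0\}$ I would let $x\in\Safe(t)$, fix $\strategy\in\Strategy_{[t,0]}$, and use the definition of $\Safe(t)$ to obtain $\ctrl(\cdot)\in\cfset_{[t,0]}$ whose trajectory stays in $\LL$ on $[t,0]$; then $\JJ(x,t,\ctrl(\cdot),\strategy[\ctrl](\cdot))=\min_s\lf(\state(s))\ge0$, so $\max_{\ctrl'}\JJ(x,t,\ctrl'(\cdot),\strategy[\ctrl'](\cdot))\ge0$, and minimizing over $\strategy$ gives $\VV(x,t)\ge0$.

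The only ingredient that is not pure definition-chasing is the attainment of the inner maximum over the infinite-dimensional set $\cfset_{[t,0]}$ of measurable control signals, which is used in the first inclusion: a naive ``take $\ctrl$ with $\JJ$ close to the supremum'' argument does not immediately close that gap, even though $\LL$ is closed, so I expect this to be the only genuine obstacle. It is, however, a standard consequence of the convexity of $\cset$ together with the boundedness and Lipschitz continuity of $f$ (a relaxed-control compactness argument), and I would simply invoke it exactly as the surrounding text already does.
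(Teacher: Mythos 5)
Your proof is correct and is essentially the paper's argument spelled out in full: the paper's proof is just the one-line assertion that the claim follows ``directly from the definition of $\VV$ and $\Safe(t)$,'' and your two-inclusion unpacking is exactly that definition-chasing. Your identification of attainment of the inner $\max$ over $\cfset_{[t,0]}$ as the only non-trivial ingredient matches the paper, which handles it by the remark (citing compactness and convexity of $\cset$, $\dset$) immediately after \eqref{eq:value_brt}.
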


\begin{proof} 
This is directly from the definition of $\VV$ and $\Safe(t)$.
\end{proof} 
 
Note that if $\Safe(\tT)$ is empty, safety can never be guaranteed under the worst-case disturbance. In the complement of $\Safe(\tT)$, the value function $\VV(x, \tT)$ is negative, therefore, for any admissible control, the trajectory is unsafe under the worst-case disturbance. This set $\Safe(\tT)^{c}$ describes what is known in the HJ Reachability community as a \textit{Backward Reachable Tube} of the unsafe set.

The value function $\VV(x, t)$ is the viscosity solution to the following Hamilton-Jacobi-Isaacs Variational Inequality (HJI-VI) \cite{fisac2015reach}:\vspace{-1.5em}

\small
\begin{align}
    0 = & \min \biggl\{\lf(x) - \VV(x, t),\biggr. \label{eq:HJI_VI_brt} \\ 
    & \left. D_{t}\VV(x, t) + \max_{\ctrl\in \cset} \min_{\dstb\in \dset} D_{x}\VV(x, t) \cdot f(x, \ctrl, \dstb) \right\} \nonumber 
\end{align}\vspace{-1em}
\normalsize

\noindent with the terminal condition $\VV(x, 0) = \lf(x)$. This means that $\VV(x,\tT)$ can be computed directly using dynamic programming backwards in time by applying the HJI-VI at each point in the state space. \vspace{-.5em}

\begin{remark}
\label{remark:viscosity}
The viscosity solution $\VV(x,t)$ is a weak solution to  \eqref{eq:HJI_VI_brt}: $\VV(x,t)$ is not differentiable for some $(x,t)$.
Under the Lipschitz assumptions for the dynamics ($f$) and the cost ($l$) in the state, $\VV(x,t)$ is Lipschitz continuous, which is differentiable almost everywhere (a.e.) in $(x,t)$-space \cite[Th.3.2.]{evans_hj}\cite{evans2015measure}.
\end{remark}

When the viability kernel $\Safe(\tT)$ is non-empty, from any element in $\Safe(\tT)$, we can synthesize a robust safe control signal from the optimal control policy. Based on whether the left or the right term in the minimum of \eqref{eq:HJI_VI_brt} is \textit{active}, the optimal policy $\pi^{*}_{V}(x, t):\R^n \times (-\infty, 0]\rightarrow U$ is determined in a different way. That is, when $\VV(x,t) < \lf(x)$,\vspace{-.5em}

\small
\begin{align}
    \label{eq:safe_policy_case_1}
    \pi^{*}_{V}(x, t) =  \argmax_{u\in U}\min_{d\in D} D_{x}V(x, t) \cdot f(x, u, d), \vspace{-.5em}
\end{align}
\normalsize
and the right term of \eqref{eq:HJI_VI_brt} is 0. Second, when $V(x, t) = l(x)$, any element of \vspace{-1em}

\small
\begin{align}
    \label{eq:safe_policy_case_2}
    K_{V}(x, t)\!:=\!\{u\!\in\!U : D_{t}V(x, t)\!+\!\min_{d\in D} D_{x}V(x, t)\!\cdot\!f(x, u, d)\!\ge\!0\}
\end{align}
\normalsize

\noindent can be used as $\pi^{*}_{V}(x, t)$. Therefore, the second case may allow multiple options for the optimal control. In either case, for any $d\in D$,\vspace{-1em}

\small
\begin{align*}
    \dot{V}(\trajsimple(t), t) &= D_{t}V(\trajsimple(t), t) \\
    &+ D_{x}V(\trajsimple(t), t) \cdot f(\trajsimple(t), \pi^{*}_{V}(\trajsimple(t), t), d) \geq 0,
\end{align*}
\normalsize

\noindent where $\state(\cdot)$ is an instantaneous trajectory of \eqref{eq:system} at $t$ with $\state(t)\!=\!x$, control $\pi_{V}^{*}(x, t)$ and disturbance $\dstb$. Therefore, this implies that for any initial state $x \in \Safe(\tT)$, for any $\strategy \in \Strategy_{[\tT, 0]}$, along the optimal trajectory $\state^{*}(\cdot)$ which solves \eqref{eq:system} for $(x, \tT, \pi_{V}^{*}, \strategy[\pi_{V}^{*}])$, the value function $\VV(\state^{*}(s), s)$ will never decrease. Since $V$ is non-negative at the initial time $\tT$, it is always kept non-negative under $\pi_{V}^{*}$ for $s\in[\tT, 0]$, which means the trajectory is rendered safe.

\begin{remark}
\label{rm:limit_of_V}
Note that for the second case of $\pi^{*}_{V}$, for any optimal $\ctrl\in K_{\VV}(x, t)$ and any $\dstb\in \dset$,\vspace{-.5em}
\begin{equation}
\label{eq:l_derivative_V}
    \dot{\lf}(\state(t))=\dot{\VV}(\state(t), t)\ge0, \vspace{-.5em}
\end{equation}
where $\state(\cdot)$ is an instantaneous trajectory of \eqref{eq:system} at $t$ with $\state(t)\!=\!x$, control $\ctrl$ and disturbance $\dstb$. This means that for the second case, $\pi^{*}_{\VV}$ requires $\lf$ to increase, in other words, it never allows the trajectory to get closer to the safety boundary. Therefore, such optimal control policy is often too restrictive to be used as a safety filter for a reference control signal. In the reachability community, to remedy this, a common practice is to switch from the reference control to the safe optimal control only when $\VV(\state(s), s)$ is close to 0, so called least-restrictive control law \cite{herbert2021safelearning, fastrack, bajcsy2019efficient}. The resulting control system with such switching law may give undesirable jerky behaviors and is prone to errors in numerically computed $D_x \VV$.\vspace{-.5em}
\end{remark}

\subsection{Control Barrier Functions}
\label{subsec:CBF}
An alternative approach for achieving the safety control objective is to use Control Barrier Functions (CBFs). The theory of CBFs is developed upon viability theory and Lyapunov-based stability theory \cite{cbfqptac}.

\begin{definition}
\label{def:cbf}
Let $\mathcal{C}$ be a zero-superlevel set of a continuously differentiable function $\BB\!:\!\R^{n}\!\to\!\R$. Consider a Lipschitz continuous controlled system without disturbance, $f\!= \!f(\state(s), \ctrl(s))$. Then $\BB$ is a \textit{Control Barrier Function} for this system if there exists an extended class $\mathcal{K}_{\infty}$ function $\alpha$ such that for all $x \in \mathcal{C}$,\vspace{-.5em}
\begin{equation}
\label{eq:cbf}
    \max_{\ctrl\in \cset}D_{x}\BB(x) \cdot f(x, \ctrl) \geq -\alpha(\BB(x)). \vspace{-.5em}
\end{equation}
\end{definition}

Introducing $-\alpha(\BB(x))$ on the right hand side of \eqref{eq:cbf} is inspired by the condition that Control Lyapunov Functions (CLFs) should satisfy in order to provide exponential stabilizability \cite{cbfqptac}. In practice, a linear function $\gamma z$ ($\gamma\!>\!0$) is often used as $\alpha(z)$. In this case, $\gamma$ serves as a \textit{maximal discount rate} of $B(\trajsimple(s))$. Informally, this means that $\BB(\trajsimple(s))$ is not allowed to decay faster than the exponentially decaying curve $\dot{\BB}(\trajsimple(s))=-\gamma \BB(\trajsimple(s))$, therefore potential unsafe behaviors smooth out as it approaches the safe boundary. More formally, the following holds: 
\begin{theorem}
\label{th:cbf}
\cite[Corollary 2]{cbfqptac} For such $\BB$ and its zero-superlevel set $\mathcal{C}$, any Lipschitz continuous controller $\pi: \mathcal{C} \to U$ such that $\pi(x) \in K_\BB (x)$ where \vspace{-.5em}
\begin{equation}
\label{eq:cbf-certificate}
    K_\BB (x) := \{\ctrl \in \cset : D_{x}\BB(x) \cdot f(x, \ctrl) \geq -\alpha(\BB(x))\}, \vspace{-.5em}
\end{equation}
will render the set $\mathcal{C}$ forward invariant \cite{cbfqptac}. In other words, $\mathcal{C}$ is control invariant. 
\end{theorem}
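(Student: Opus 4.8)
The plan is to reduce forward invariance of $\mathcal{C}$ to a scalar differential inequality along closed-loop trajectories and then close the argument with a comparison lemma. First I would fix $x\in\mathcal{C}$ and consider the closed-loop vector field $f_{\mathrm{cl}}(\cdot):=f(\cdot,\pi(\cdot))$, which is locally Lipschitz as the composition of the Lipschitz maps $f$ and $\pi$; hence the closed-loop system admits a unique solution $\state(\cdot)$ with $\state(0)=x$ on a maximal interval of existence, and since $f$ is bounded this interval extends to all $s\ge 0$. Define $b(s):=\BB(\state(s))$. Since $\BB$ is $C^1$ and $\state(\cdot)$ is $C^1$, $b$ is differentiable, and on any subinterval where $\state(s)\in\mathcal{C}$ the membership $\pi(\state(s))\in K_\BB(\state(s))$ together with \eqref{eq:cbf-certificate} gives $\dot b(s)=D_{x}\BB(\state(s))\cdot f_{\mathrm{cl}}(\state(s))\ge -\alpha(\BB(\state(s)))=-\alpha(b(s))$.

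Next I would invoke a comparison argument against the scalar ODE $\dot y=-\alpha(y)$. Because $\alpha$ is an extended class $\mathcal{K}_\infty$ function, $\alpha(0)=0$, so $y\equiv 0$ solves this ODE; moreover $\alpha$ is strictly increasing through the origin, so $-\alpha(y)>0$ whenever $y<0$, whence every solution of the comparison ODE starting at a nonnegative value remains nonnegative for all forward time. The standard differential-inequality lemma — comparing $b$, which satisfies $\dot b\ge -\alpha(b)$, with the maximal solution of $\dot y=-\alpha(y)$ from $y(0)=0\le b(0)=\BB(x)$ — then yields $b(s)\ge 0$, i.e. $\state(s)\in\mathcal{C}$, on every interval on which the differential inequality is known to hold. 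In the common case $\alpha(z)=\gamma z$ this is fully explicit: $\frac{d}{ds}\!\left(e^{\gamma s}b(s)\right)\ge 0$ gives $\BB(\state(s))\ge \BB(x)e^{-\gamma s}\ge 0$ directly.

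Finally I would remove the apparent circularity between ``the inequality holds'' and ``$\state(s)\in\mathcal{C}$'' by a maximality/continuity argument: set $s^\star:=\sup\{\,\tau\ge 0 : \state(s)\in\mathcal{C}\ \forall\, s\in[0,\tau]\,\}$; the differential inequality and the comparison bound hold on $[0,s^\star)$, forcing $\BB(\state(s))\ge 0$ there, and by continuity $\BB(\state(s^\star))\ge 0$, so $\state(s^\star)\in\mathcal{C}$ as well, ruling out a finite $s^\star$. Hence $\state(s)\in\mathcal{C}$ for all $s\ge 0$, which is exactly forward invariance of $\mathcal{C}$; control invariance of $\mathcal{C}$ follows since the CBF condition \eqref{eq:cbf} guarantees $K_\BB(x)\neq\emptyset$ on $\mathcal{C}$, so such a $\pi$ exists. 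The main obstacle I anticipate is handling the comparison step cleanly when $\alpha$ is merely continuous and strictly increasing rather than Lipschitz, since the classical comparison lemma presumes local Lipschitzness of the comparison vector field; the fix is either to argue with the maximal solution of $\dot y=-\alpha(y)$ or, more simply, to exploit the sign of $\alpha$ near the origin to show $b$ cannot cross zero downward — essentially the Nagumo-type / class-$\mathcal{KL}$ argument of \cite{cbfqptac}.
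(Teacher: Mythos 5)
The paper does not actually prove this statement --- it is imported verbatim from \cite[Corollary 2]{cbfqptac} --- so the only fair comparison is with that reference's argument, and yours is essentially the same: reduce to the scalar differential inequality $\dot b \ge -\alpha(b)$ along the closed-loop trajectory and close with a comparison/Nagumo argument. The setup (Lipschitz closed loop, well-posedness, the inequality on $\mathcal{C}$, the explicit Gr\"onwall computation for $\alpha(z)=\gamma z$) is all correct.

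The one step that is asserted rather than proved is the escape from circularity. Defining $s^\star:=\sup\{\tau\ge0:\state(s)\in\mathcal{C}\ \forall s\in[0,\tau]\}$ and showing $\state(s^\star)\in\mathcal{C}$ does \emph{not} rule out a finite $s^\star$: the supremum of a closed set of times can perfectly well be attained, so there is no contradiction yet. To contradict finiteness you must push the trajectory into $\mathcal{C}$ on some $[s^\star,s^\star+\epsilon]$, and this is exactly where the proof lives. If $\BB(\state(s^\star))>0$ continuity suffices, but in the critical case $\BB(\state(s^\star))=0$ you need one of two additional ingredients: (i) the CBF inequality on an open neighborhood of $\mathcal{C}$ (this is how \cite{cbfqptac} sets things up, requiring \eqref{eq:cbf} on an open set $D\supseteq\mathcal{C}$, so that the bound $\dot b\ge-\alpha(b)>0$ is available even where $b$ is slightly negative and $b$ provably cannot cross zero downward), or (ii) a Nagumo sub-tangentiality argument from $\dot b\ge-\alpha(0)=0$ on $\partial\mathcal{C}$, which in turn needs a regularity hypothesis such as $D_x\BB(x)\neq0$ on $\partial\mathcal{C}$ so that $\{\BB=0\}$ really is the boundary with the expected tangent cone. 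With Definition~\ref{def:cbf} as stated (condition only on $\mathcal{C}$, no regularity of $D_x\BB$ on the zero level set), the theorem is not quite provable without one of these additions; you flag both fixes in your closing sentence, so this is an incompletely executed step rather than a wrong route, but it is the crux and should be carried out, not gestured at. A second, minor point: pointwise nonemptiness of $K_{\BB}(x)$ does not by itself yield a \emph{Lipschitz} selection $\pi$ with $\pi(x)\in K_{\BB}(x)$; the standard justification is that the min-norm (QP) selection is Lipschitz under suitable regularity, which is worth a citation rather than the bare assertion ``so such a $\pi$ exists.''
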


Condition \eqref{eq:cbf} can be incorporated in an online optimization based controller that minimizes the norm of the difference between $u$ and the reference control $u_{ref}$. For control-affine systems, this can become a Quadratic Program, namely Control Barrier Function-based Quadratic Program (CBF-QP) \cite{cbfqptac}, and can be used as an online safety filter for any reference control signal $u_{ref}$.

\subsection{Comparison between HJ reachability and CBF}

In this subsection, we restrict our interest to systems without disturbance, $f\!= \!f(\state(s), \ctrl(s))$, for the comparison between value function from the reachability $V$ and CBF $B$. Note that by extending the definition of $V$ to infinite-time horizon as $\VV_{\infty}(x)\!:=\!\lim_{t\rightarrow-\infty}\VV(x, t)$, we can get a time-invariant value function \cite{fialho_worst} whose zero-superlevel set $\Safe_{\infty} := \{ x : V_{\infty}(x) \geq 0 \}$ is a maximal control invariant set contained in $\LL$. The latter results from extending Proposition \ref{prop:viability_kernel} to infinite horizon.

The geometric connection between the zero-superlevel set of the CBF $\BB$, $\mathcal{C}$, and the zero-superlevel set of $V_{\infty}$, $\Safe_\infty$, is that $\mathcal{C}$ is always a subset of $\Safe_\infty$. This is because in order to use $\BB$ for our safety objective \eqref{eq:safeset}, the control invariant set $\mathcal{C}$ should be a subset of $\LL$, as shown in Fig.~\ref{fig:main-diagram}. Since $\Safe_\infty$ is the maximal control invariant set in $\LL$, $\mathcal{C} \subseteq \Safe_\infty$.

Also, note that $\VV_{\infty}$ satisfies the CBF condition \eqref{eq:cbf} for any extended class $\mathcal{K}_{\infty}$ function $\alpha$ where the gradient $D_x \VV_{\infty}$ exists,
from the fact, $D_t \VV_\infty =0$, and the HJI-VI \eqref{eq:HJI_VI_brt}:\vspace{-.5em}
\begin{equation*}
\max_{\ctrl\in U} D_x \VV_\infty(x)\cdot f(x,\ctrl)\geq 0 \geq -\alpha(\VV_\infty(x)). \vspace{-.5em}
\end{equation*} 
This implies that if $\VV_{\infty}$ is differentiable in $\Safe_{\infty}$, then setting $B=\VV_{\infty}$ works as a valid CBF with $\mathcal{C}=\Safe_{\infty}$. However, if it is not the case, it is hard to devise a CBF such that its zero-superlevel set recovers the maximal control invariant set in $\LL$ without relaxing its differentiability condition. Note that choosing $B=l$, which makes $\mathcal{C}=\LL$, would not be a valid CBF in general. In many cases, a valid handcrafted CBF results in its zero-superlevel set $C$ strictly smaller than $\Safe_\infty$. 
\vspace{-1em}
\section{Robust Control Barrier-Value Function and Hamilton-Jacobi-based Verification \label{sec:construction}}

Note that the condition the CBF-based safe control should satisfy, $D_{x}\BB(x) \cdot f(x, \ctrl) \geq -\alpha(\BB(x))$, from Theorem \ref{th:cbf}, is less restrictive than the condition the optimal control for $V$ should satisfy, $\min_{d\in D} D_x \VV(x, s)\cdot f(x,\ctrl,d) \ge 0$. This is mainly because of the introduction of $-\alpha(\cdot)$ on the right hand side of \eqref{eq:cbf}. Inspired by this and the fact that when $\alpha(B(x)) \equiv \gamma B(x)$, $\gamma$ serves as the maximal discount rate of $B$, we define the following new value function.

\begin{definition} A Robust Control Barrier-Value Function (\CBVF) $\BVg:\R^n \times (-\infty, 0] \rightarrow \R$ is defined as
\label{def:maximal-cbf}
\begin{equation}
\label{eq:mr-cbf-value}
    \BVg(x, t):= \min_{\strategy\in\Strategy[t, 0]} \max_{\ctrl\in\UU_{[t,0]}} \min_{s\in[t,0]} e^{\gamma(s-t)} l(\state(s)),
\end{equation}
where $\state(\cdot)$ solves for $(x, t, \ctrl, \strategy[\ctrl])$, for some $\gamma\ge0$ and $\forall t\le0$. At $t=0$, we get terminal condition $\BVg(x, 0) = \lf(x)$.
\end{definition}

Note that $\cbf$ is defined for each fixed value of $\gamma\ge0$. Now, consider the case $\gamma = 0$. For this case, the definition of $\BV_{0}$ in \eqref{eq:mr-cbf-value} matches with the definition of the original reachability-based value function in \eqref{eq:value_brt}. This is not surprising because \eqref{eq:mr-cbf-value} should be regarded as a \textit{special case of the reachability problem}, whose target function is exponentially decaying backward in time.  

Since \eqref{eq:mr-cbf-value} is an optimal control problem under a differential game setting, Bellman's principle of optimality can be applied to derive the dynamic programming principle for $\cbf$.
\vspace{-1.5em}

\begin{theorem}
\label{th:cbf-dynamic-principle} \textbf{(Dynamic Programming Optimality Condition)} For the Robust CBVF $\cbf$ in Definition \ref{def:maximal-cbf}, for each $t<t+\delta\le0$, the following is satisfied.\vspace{-1em}

\small
\begin{align}
\cbf(x, t) = \min_{\strategy\in\Strategy_{[t,0]}} \max_{u\in\UU_{[t,0]}} \min \Bigl\{ \min_{s\in[t, t+\delta]} e^{\gamma(s-t)} l(\state & (s)), \Bigr. \nonumber\\ 
\left. e^{\gamma \delta} \cbf(\state(\tpdelta),\tpdelta) \right\} & \label{eq:cbf-dp}
\end{align}
\normalsize

\noindent where $\state(\cdot)$ solves \eqref{eq:system} for $(x, t, u, \strategy)$.\vspace{-.5em}
\end{theorem}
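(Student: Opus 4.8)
The plan is to establish the inequalities ``$\cbf(x,t)\le W(x,t)$'' and ``$\cbf(x,t)\ge W(x,t)$'' separately, where $W(x,t)$ denotes the right-hand side of \eqref{eq:cbf-dp}. Both rest on three elementary structural facts. (i) \emph{Cost decomposition}: for any trajectory $\state(\cdot)$ solving \eqref{eq:system} for $(x,t,\ctrl,\dstb)$, splitting $[t,0]=[t,\tpdelta]\cup[\tpdelta,0]$ and using $e^{\gamma(s-t)}=e^{\gamma\delta}e^{\gamma(s-t-\delta)}$ on the tail gives
\begin{equation*}
\min_{s\in[t,0]} e^{\gamma(s-t)}\lf(\state(s))=\min\Bigl\{\min_{s\in[t,\tpdelta]} e^{\gamma(s-t)}\lf(\state(s)),\ e^{\gamma\delta}\min_{s\in[\tpdelta,0]} e^{\gamma(s-t-\delta)}\lf(\state(s))\Bigr\}.
\end{equation*}
(ii) \emph{Concatenation with time-invariance}: since $f$ is time-invariant and \eqref{eq:system} is well-posed, $\state(\cdot)$ restricted to $[\tpdelta,0]$ is precisely the trajectory solving \eqref{eq:system} for $(\state(\tpdelta),\tpdelta,\ctrl|_{[\tpdelta,0]},\dstb|_{[\tpdelta,0]})$, and any two such segments glue at time $\tpdelta$. (iii) \emph{Splitting and gluing of nonanticipative strategies}: every $\xi\in\Strategy_{[t,0]}$ restricts to some $\xi_1\in\Strategy_{[t,\tpdelta]}$ and, for each fixed $\ctrl_1\in\UU_{[t,\tpdelta]}$, induces a ``tail'' strategy in $\Strategy_{[\tpdelta,0]}$; conversely a strategy on $[t,\tpdelta]$ together with any family $\{\xi_2^{\ctrl_1}\}$ in $\Strategy_{[\tpdelta,0]}$ concatenates --- at time $\tpdelta$, with the tail indexed by the control already played --- to an element of $\Strategy_{[t,0]}$, nonanticipativity being preserved since the defining condition is per input and the join time is fixed. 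A companion remark, used twice, is that the bracketed functional in $W$ depends on the trajectory only through $[t,\tpdelta]$, i.e.
\begin{equation*}
W(x,t)=\min_{\xi_1\in\Strategy_{[t,\tpdelta]}}\ \max_{\ctrl_1\in\UU_{[t,\tpdelta]}}\ \min\Bigl\{\min_{s\in[t,\tpdelta]}e^{\gamma(s-t)}\lf(\state(s)),\ e^{\gamma\delta}\cbf(\state(\tpdelta),\tpdelta)\Bigr\}.
\end{equation*}

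For ``$\ge$'': fix $\xi\in\Strategy_{[t,0]}$ with restriction $\xi_1$ and tails $\xi_2^{\ctrl_1}$. Write each $\ctrl\in\UU_{[t,0]}$ as the concatenation of $\ctrl_1\in\UU_{[t,\tpdelta]}$ and $\ctrl_2\in\UU_{[\tpdelta,0]}$; by (i)--(ii) and the monotonicity of $b\mapsto\min\{a,b\}$ we may pull $\max_{\ctrl_2}$ inside the outer $\min$, and since $\xi_2^{\ctrl_1}$ is a single competitor in the game defining $\cbf(\state(\tpdelta),\tpdelta)$ we obtain
\begin{equation*}
\max_{\ctrl\in\UU_{[t,0]}}\min_{s\in[t,0]}e^{\gamma(s-t)}\lf(\state(s))\ \ge\ \max_{\ctrl_1\in\UU_{[t,\tpdelta]}}\min\Bigl\{\min_{s\in[t,\tpdelta]}e^{\gamma(s-t)}\lf(\state(s)),\ e^{\gamma\delta}\cbf(\state(\tpdelta),\tpdelta)\Bigr\},
\end{equation*}
the trajectory on $[t,\tpdelta]$ being driven by $\xi_1$. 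Taking $\min$ over $\xi$ and invoking the companion remark yields $\cbf(x,t)\ge W(x,t)$.

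For ``$\le$'': given $\epsilon>0$, choose $\xi_1\in\Strategy_{[t,\tpdelta]}$ that is $\epsilon$-optimal for the reduced problem defining $W$ and, for each $\ctrl_1\in\UU_{[t,\tpdelta]}$, a strategy $\xi_2^{\ctrl_1}\in\Strategy_{[\tpdelta,0]}$ that is $\epsilon$-optimal for $\cbf(\state(\tpdelta),\tpdelta)$ --- note $\state(\tpdelta)$ is already determined by $(\ctrl_1,\xi_1)$, so no selection in the state variable arises. Gluing $\xi_1$ with $\{\xi_2^{\ctrl_1}\}$ via (iii) gives $\xi^\star\in\Strategy_{[t,0]}$; evaluating $\max_{\ctrl}\min_s e^{\gamma(s-t)}\lf(\state(s))$ along $\xi^\star$, decomposing by (i)--(ii), pulling $\max_{\ctrl_2}$ inside, and inserting the two $\epsilon$-bounds shows this quantity is at most $W(x,t)+(1+e^{\gamma\delta})\epsilon$; since $\cbf(x,t)$ is a minimum over $\xi\in\Strategy_{[t,0]}$ it is bounded above by this value, and letting $\epsilon\downarrow0$ gives $\cbf(x,t)\le W(x,t)$. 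I expect the ``$\le$'' direction to be the main obstacle: one has to verify carefully that the control-history-indexed concatenation is genuinely nonanticipative on $[t,0]$, carry out the $\min$--$\max$ interchange on the tail in the correct order, and keep the two approximation errors additive rather than compounding. The remaining manipulations are routine, with the outer $\min$ and $\max$ attained by the compactness and convexity of $\cset,\dset$ noted after \eqref{eq:value_brt} (and $\inf$/$\sup$ with $\epsilon$-optimal competitors used freely in the intermediate estimates).
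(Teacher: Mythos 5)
Your proposal is correct and follows essentially the same route as the paper's proof: both directions are established by the standard Evans--Souganidis argument of splitting the discounted cost at $\tpdelta$, restricting/gluing nonanticipative strategies with control-history-indexed tails, and inserting $\epsilon$-optimal competitors so the errors stay additive. The only differences are cosmetic (your ``$\ge$'' direction argues pointwise in $\xi$ rather than extracting explicit near-optimal controls $u_0,u_1$, and your error constant $(1+e^{\gamma\delta})\epsilon$ plays the role of the paper's rescaling $\epsilon_1=e^{-\gamma\delta}\epsilon$).
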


\begin{proof}
See Appendix.\vspace{-.5em}
\end{proof}

Theorem \ref{th:cbf-dynamic-principle} leads to the derivation of the following theorem, which is the main theoretical result of this paper, showing that $\BVg$ can be obtained by solving a particular variational inequality that has the form of HJI-VI. \vspace{-.5em}

\begin{theorem}
\label{th:mr-cbf-hji-vi}
The Robust \CBVF~$\BVg$ is a Lipschitz continuous unique viscosity solution of the \CBVF~variational inequality (\CBVF-VI) below with the terminal condition $\BVg\!(x,\!0)\!=\!\lf(\!x\!)$:\vspace{-1em}

\small
\begin{align}
    & 0 = \min \biggl\{\lf(x) - \BVg(x, t),\biggr. \label{eq:HJI_VI_mr_cbf} \\ 
    & \left. D_{t}\BVg(x, t)\!+\!\max_{\ctrl\in \cset} \min_{\dstb\in \dset} D_{x}\BVg(x, t)\!\cdot\!f(x, \ctrl, \dstb)\!+\!\gamma \BVg(x, t)\!\right\} \nonumber.
\end{align}
\normalsize
\end{theorem}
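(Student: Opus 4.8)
The plan is to establish two things separately: (i) that $\BVg$ is \emph{a} viscosity solution of the \CBVF-VI \eqref{eq:HJI_VI_mr_cbf} with the stated terminal condition, and (ii) that this solution is unique (and Lipschitz), so that $\BVg$ is \emph{the} viscosity solution. The engine for (i) is the dynamic programming principle of Theorem~\ref{th:cbf-dynamic-principle}; the engine for (ii) is a standard comparison principle for HJI variational inequalities, which I would either cite from the reachability literature (e.g.\ \cite{fisac2015reach,evans_hj}) or obtain by a doubling-of-variables argument adapted to the extra $+\gamma \BVg$ term.

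For step (i), first I would record the obvious bound $\BVg(x,t)\le \lf(x)$ (take $\delta\to 0$ in \eqref{eq:cbf-dp}, or note $s=t$ is admissible in the inner minimum), which handles the left branch of the $\min$ and gives the terminal condition $\BVg(x,0)=\lf(x)$ directly from the definition. Lipschitz continuity of $\BVg$ in $(x,t)$ follows from the Lipschitz and boundedness assumptions on $f$ and $\lf$, the boundedness of the exponential weight $e^{\gamma(s-t)}$ on bounded time intervals, and Gronwall estimates on trajectories, exactly as in Remark~\ref{remark:viscosity} for $V$; the only new ingredient is controlling $|e^{\gamma(s-t)}-e^{\gamma(s-t')}|$, which is Lipschitz in $t,t'$ uniformly for $s,t,t'$ in a compact set. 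Then, for the viscosity sub/supersolution inequalities, I would take a test function $\phi$ touching $\BVg$ from above (resp.\ below) at an interior point $(x_0,t_0)$ and plug a short-horizon trajectory into \eqref{eq:cbf-dp}. When $\BVg(x_0,t_0)<\lf(x_0)$ the first term in the outer $\min$ of \eqref{eq:cbf-dp} is inactive for $\delta$ small (by continuity), so \eqref{eq:cbf-dp} reduces to the standard Isaacs-equation DPP for the weighted value $e^{\gamma\delta}\BVg(\state(t+\delta),t+\delta)$; differentiating the relation $\phi(x_0,t_0)\gtrless e^{\gamma\delta}\phi(\state(t+\delta),t+\delta)$ at $\delta=0$ produces $D_t\phi+\sup_u\inf_d D_x\phi\cdot f+\gamma\phi \gtrless 0$, i.e.\ the right branch of \eqref{eq:HJI_VI_mr_cbf}, while the left branch $\lf-\BVg\ge 0$ holds from the bound above. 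The one technical point here is that the inner $\min_{s\in[t,t+\delta]}$ in \eqref{eq:cbf-dp} must be shown not to bind in the limit, which is where the strict inequality $\BVg(x_0,t_0)<\lf(x_0)$ is used; on the complementary set $\BVg(x_0,t_0)=\lf(x_0)$ the left branch is active and the subsolution inequality is immediate, so only the supersolution inequality needs the DPP argument there, and it goes through because the $\min$ with $e^{\gamma\delta}\BVg(\cdot)$ only lowers the right-hand side.

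For step (ii), uniqueness, I would invoke the comparison principle: any bounded uniformly continuous viscosity subsolution $u$ and supersolution $v$ of \eqref{eq:HJI_VI_mr_cbf} with $u(x,0)\le v(x,0)$ satisfy $u\le v$ on $\R^n\times(-\infty,0]$. The proof is the classical doubling-of-variables / sup-convolution argument for first-order HJ equations; the variational-inequality (obstacle) structure is handled by the usual case split on which branch of the $\min$ realizes the value at the doubled maximum point, and the extra linear term $+\gamma w$ is monotone in $w$ for $\gamma\ge 0$, which is exactly the structural condition ($\partial_w$ nondecreasing) that makes the comparison argument work — it contributes a favorable $\gamma(u-v)$ term rather than an obstruction. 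Since $\BVg$ is both a sub- and supersolution with the correct terminal data, comparison pins it down as the unique viscosity solution.

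The main obstacle I anticipate is the careful handling of the nested minimization in the DPP \eqref{eq:cbf-dp} when passing to the infinitesimal limit: one must argue uniformly (over the nonanticipative strategies and over $\ctrl(\cdot)$) that for $\delta\downarrow 0$ the term $\min_{s\in[t,t+\delta]}e^{\gamma(s-t)}\lf(\state(s))$ either coincides with $\lf(x_0)+o(1)$ (so it is dominated by the other branch precisely when $\BVg(x_0,t_0)<\lf(x_0)$) or else the left branch of \eqref{eq:HJI_VI_mr_cbf} is the active one. Everything else — the Gronwall/Lipschitz estimates, the Hamiltonian being continuous thanks to compact convex $U,D$ and Lipschitz bounded $f$, and the comparison principle — is routine but must be stitched together in the right order; I would present the sub/supersolution verification in detail and cite the comparison principle, noting that the monotonicity in $\gamma\ge 0$ is what keeps it applicable.
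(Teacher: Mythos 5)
Your proposal is correct and follows essentially the same route as the paper: the dynamic programming principle of Theorem~\ref{th:cbf-dynamic-principle} combined with a rigorous short-time expansion of $e^{\gamma\delta}\phi(\psi(t_0+\delta),t_0+\delta)$ along near-optimal strategies (which the paper isolates as Lemma~\ref{lm:apdx_1}, adapted from Evans--Souganidis) to verify the two viscosity inequalities for test functions touching $\BVg$ from above and below, with uniqueness and Lipschitz continuity delegated to the literature exactly as you propose. The only differences are cosmetic: the paper runs both verifications by contradiction (assuming one branch of the $\min$ is bounded away from zero by some $\theta>0$ and contradicting the DPP) rather than your direct case split on whether $\BVg(x_0,t_0)<\lf(x_0)$ or $\BVg(x_0,t_0)=\lf(x_0)$, and your ``differentiate at $\delta=0$'' step is precisely where the nonanticipative-strategy construction of that lemma must be invoked to make the limit uniform.
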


\begin{proof} See Appendix.
\end{proof}\vspace{-.5em}

The following proposition shows that like the original reachability-based value function $\VV$ from \eqref{eq:value_brt}, $\BVg$ can also be used to verify the viability kernel $\Safe(\tT)$. In other words, the zero-superlevel set of the Robust CBVF contains every initial state from which robust safety guarantee is possible for a chosen time span. This is in sharp contrast to the CBFs, since the safe invariant set from a given CBF is only guaranteed to be a subset of the maximal control invariant set. Moreover, since CBVF is concerned with safety for finite-time horizon, the obtained safe set can be much bigger than the control invariant set from CBFs. Therefore, in addition to the fact that the CBVF is constructive, the main benefit of using the CBVF is that it recovers the biggest permissible region for the system for maintaining safety (Fig. \ref{fig:main-diagram}).\vspace{-.5em}

\begin{proposition}
For each $t\leq 0$, define $\mathcal{C}_{\gamma}(t):=\{x\in\R^n : \BVg(x, t) \ge 0\}$. Then, $\forall t \leq 0$, $\mathcal{C}_{\gamma}(t)=\mathcal{S}(t)$. \vspace{-.5em}

\end{proposition}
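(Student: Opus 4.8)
The plan is to exploit the one structural difference between $\BVg$ and the reachability value function $\VV=\BV_{0}$: the discount factor $e^{\gamma(s-t)}$ is \emph{strictly positive} for $s\in[t,0]$ (indeed $\geq 1$, since $\gamma\geq 0$ and $s\geq t$). Writing $\JJ_{\gamma}(x,t,\ctrl(\cdot),\dstb(\cdot)):=\min_{s\in[t,0]}e^{\gamma(s-t)}\lf(\state(s))$ for the payoff along the trajectory $\state(\cdot)$ solving \eqref{eq:system} for $(x,t,\ctrl,\dstb)$, I would first record the elementary observation that, \emph{for any fixed admissible trajectory}, $\JJ_{\gamma}\geq 0$ holds if and only if $\lf(\state(s))\geq 0$ for every $s\in[t,0]$, i.e. if and only if $\state(s)\in\LL$ for all $s\in[t,0]$. (Exactly the same equivalence holds for $\gamma=0$, which is why $\mathcal{C}_{\gamma}(t)$ should come out equal to $\{x:\VV(x,t)\geq 0\}=\mathcal{S}(t)$, the last equality being Proposition~\ref{prop:viability_kernel}; I will give the direct argument rather than route through that proposition.)

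For the inclusion $\mathcal{S}(t)\subseteq\mathcal{C}_{\gamma}(t)$, I would take $x\in\mathcal{S}(t)$, so $x\in\LL$ and for every $\strategy\in\Strategy_{[t,0]}$ there is $\ctrl(\cdot)\in\cfset_{[t,0]}$ keeping the resulting trajectory in $\LL$ on $[t,0]$; by the observation this control achieves $\JJ_{\gamma}\geq 0$, hence $\max_{\ctrl}\JJ_{\gamma}(x,t,\ctrl(\cdot),\strategy[\ctrl](\cdot))\geq 0$ for every $\strategy$, and taking the minimum over $\strategy$ gives $\BVg(x,t)\geq 0$.

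For the reverse inclusion $\mathcal{C}_{\gamma}(t)\subseteq\mathcal{S}(t)$, I would start from $\BVg(x,t)\geq 0$. Evaluating the inner minimum at $s=t$ shows $\JJ_{\gamma}\leq\lf(x)$ for every $\ctrl(\cdot),\dstb(\cdot)$, so $0\leq\BVg(x,t)\leq\max_{\ctrl}\JJ_{\gamma}(x,t,\ctrl(\cdot),\strategy[\ctrl](\cdot))\leq\lf(x)$ for any $\strategy$, giving $x\in\LL$. Then, for an arbitrary $\strategy\in\Strategy_{[t,0]}$, since $\BVg(x,t)=\min_{\strategy'}\max_{\ctrl}\JJ_{\gamma}$ is a lower bound for the value at this particular $\strategy$, I get $\max_{\ctrl}\JJ_{\gamma}(x,t,\ctrl(\cdot),\strategy[\ctrl](\cdot))\geq\BVg(x,t)\geq 0$; picking a maximizer $\ctrl^{*}(\cdot)$ (the maximum over $\cfset_{[t,0]}$ is attained because $\cset$ and $\dset$ are compact and convex, as noted before Proposition~\ref{prop:viability_kernel}), the relation $\min_{s\in[t,0]}e^{\gamma(s-t)}\lf(\state^{*}(s))\geq 0$ together with strict positivity of $e^{\gamma(s-t)}$ forces $\state^{*}(s)\in\LL$ for all $s\in[t,0]$. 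Thus every $\strategy$ is answered by a control staying in $\LL$, and with $x\in\LL$ this is precisely $x\in\mathcal{S}(t)$.

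The argument is essentially a one-line sign observation plus bookkeeping of the $\min/\max$ operators, so I do not expect a genuine obstacle; the only subtle point is that $\BVg(x,t)\geq 0$ must be upgraded to the \emph{existence} of a control whose trajectory actually remains in $\LL$ against each disturbance strategy — not merely a maximizing sequence of controls whose trajectories might escape $\LL$ in the limit — which is exactly what the attainment of the maximum over $\cfset_{[t,0]}$, guaranteed by compactness and convexity of $\cset$ and $\dset$, provides.
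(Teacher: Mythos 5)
Your proof is correct and takes essentially the same route as the paper's: both directions come down to the observation that $e^{\gamma(s-t)}>0$ preserves the sign of $\lf$ along the trajectory, combined with unwinding the $\min/\max$ in the definition of $\BVg$. The only difference is that you make explicit the attainment of the inner maximum (and the check that $x\in\LL$), which the paper leaves implicit, relying on its earlier remark that the extrema exist because $U$ and $D$ are compact and convex.
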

\begin{proof} For each $t\!\in\!(-\infty,0]$, consider $x$ such that $\BVg(x,t)\geq0$. For $\forall\strategy\in\Strategy_{[t,0]}$, there exists $\ctrl\in\UU_{[t,0]}$ such that $\min_{s\in[t,0]}e^{\gamma(s-t)}l(\state(s))\geq 0$.
Therefore, $x$ belongs to $\mathcal{S}(t)$.

Consider $x\in \mathcal{S}(t)$. For all $\strategy\in\Strategy_{[t,0]}$, there exists $\ctrl\in\UU_{[t,0]}$ such that $l(\state(s))$ is non-negative for all $s\in[t,0]$. 
Thus, $\max_{\ctrl\in\UU_{[t,0]}}\min_{s\in[t,0]} e^{\gamma(s-t)}l(\state(s))$ is non-negative for all $\strategy$, and $\BVg(x,t)\geq0$.
\end{proof}\vspace{-.5em}

Finally, since $\VV$ can be used to verify the viability kernel $\Safe(t)$, readers might wonder the additional benefit of introducing $\BVg$. In the next section, we explain why using $\BVg$ would be preferable to using the original value function $\VV$.

\section{Optimal Control Policy of the \CBVF}  
\label{sec:onlinecontrol}

\subsection{Evaluation of the optimal control policy of the \CBVF}
\label{subsec:onlinecontrol}\vspace{-.5em}

The main benefit of using the optimal controller from the new formulation of CBVF $\cbf$ instead of the original reachability-based optimal controller $\pi_{V}^{*}$ is that it can significantly reduce the conservativeness of $\pi_{V}^{*}$ (Remark \ref{rm:limit_of_V}).

First recall how the optimal policy $\pi_{V}^{*}$ of $\VV$ is verified: 1) when $\VV(x,t) < \lf(x)$, it is determined by \eqref{eq:safe_policy_case_1}, and 2) when $\VV(x, t) = \lf(x)$, any element of \eqref{eq:safe_policy_case_2} is optimal. 

From the \CBVF-VI \eqref{eq:HJI_VI_mr_cbf}, we can verify the optimal control policy with respect to $\BVg$ similarly. For the first case, when $\BVg(x, t) < \lf(x)$, the second term of \eqref{eq:HJI_VI_mr_cbf} must be zero; therefore the optimal control must be given by \vspace{-.5em}

\small
\begin{align}
    \label{eq:safe_policy_cbf_case_1}
    \pi^{*}_{\cbf}(x, t) =  \argmax_{\ctrl\in \cset}\min_{\dstb\in \dset} D_{x}\cbf(x, t) \cdot f(x, \ctrl, \dstb), \vspace{-.5em}
\end{align}
\normalsize
which is similar to the first case of $\pi_{V}^{*}$. Also, the CBVF-VI \eqref{eq:HJI_VI_mr_cbf} implies that for this case,\vspace{-1.5em}

\small
\begin{align}
    D_{t}\cbf(x, t) + \min_{\dstb\in \dset} D_{x}\cbf(x, t) \cdot f(x, \pi^{*}_{\cbf}(x, t), \dstb) + \gamma \cbf & (x, t)  \nonumber \\
    = \dot{\BV}_{\gamma}(\state(t), t) + \gamma \cbf(\state(t), t) = 0. \label{eq:B_derivative_case_1}&
\end{align}
\normalsize

\noindent For the second case, when $\BVg(x, t) = \lf(x)$, any element of\vspace{-1em}

\small
\begin{align}
\label{eq:safe_policy_cbf_case_2}
    K_{\cbf}(x,t)\!:=\!\{\ctrl\in \cset\!: D_{t}\BVg(x, t) + & \min_{\dstb\in \dset} D_{x}\BVg(x, t)\!\cdot\!f(x, \ctrl, \dstb) \nonumber \\ 
    & + \gamma \BVg(x, t) \ge 0\}
\end{align}
\vspace{-1em}
\normalsize

\noindent is optimal with respect to $\BVg$ and can be used as $\pi^{*}_{\BVg}$. For this case, $K_{\cbf}(x,t)$ is always non-empty because the second term of \eqref{eq:HJI_VI_mr_cbf} is greater or equal to 0, and for any $\ctrl\in K_{\cbf}(x, t)$ and any $\dstb\in \dset$,\vspace{-.5em}
\begin{equation}
\label{eq:l_derivative_B}
    \dot{\lf}(\state(t))=\dot{\BV}_{\gamma}(\state(t), t)\ge - \gamma \BVg(\state(t), t) = -\gamma \lf(\state(t)), \vspace{-.5em}
\end{equation}
where $\state(\cdot)$ solves \eqref{eq:system} for $(x,t,\ctrl,\dstb)$.

It is crucial to note the difference between \eqref{eq:l_derivative_V} and \eqref{eq:l_derivative_B}. Speaking informally, both second cases of the optimal control policies with respect to $\VV$ and $\BVg$ occur when the state is not at stake of violating safety, therefore, the user is allowed to choose any $\ctrl$ from $K_\VV$ and $K_{\BVg}$ as $\pi^{*}_{\VV}$ and $\pi^{*}_{\BVg}$, respectively. However, as Remark \ref{rm:limit_of_V} explains, $\pi^{*}_{\VV}$ still never allows the state to get closer to the safety boundary. On the other hand, $\pi^{*}_{\BV_\gamma}$ allows $\lf$ to decrease as long as it satisfies \eqref{eq:l_derivative_B}, which is a very similar property that CBFs have.
Therefore, $\pi^{*}_{\BVg}$ allows for more control authority than $\pi^{*}_{\VV}$, while achieving the same safety objective. 

This benefit of $\BVg$ over $V$ can be regarded as CBF's property of becoming less conservative instilled in the HJ reachability formulation. In the next section, we will numerically demonstrate that the optimal trajectories from $\pi^{*}_{\BVg}$ actually behave less conservative than the optimal trajectories from $\pi^{*}_{V}$, especially with higher value of $\gamma$. \vspace{-.5em}

\subsection{Online optimal policy synthesis for control-affine systems}

We end this section by proposing a specific way of synthesizing $\pi^{*}_{\BVg}$ for systems affine in control and disturbance:
\vspace{-2em}

\small
\begin{equation}
\label{eq:control-affine-sys}
    \dot{\state}(s)\!=\!f(\state(s), \ctrl(s), \dstb(s)) = p(\state(s)) + q(\state(s))\ctrl(s) + r(\state(s))d(s),
\end{equation}
\normalsize

\noindent where $p:\!\R^n\!\rightarrow\!\R^n$, $q:\!\R^n\!\rightarrow\!\R^{n\times m}$, and $r:\!\R^n\!\rightarrow\!\R^{n\times w}$.

Note that $u=\pi^{*}_{\BVg}(x, t)$ should satisfy\vspace{-.3em}
\small
\begin{equation*}
    D_{t}\BVg(x, t) + \min_{\dstb\in \dset} D_{x}\BVg(x, t)\!\cdot\!f(x, \ctrl, \dstb) \nonumber + \gamma \BVg(x, t) \ge 0 \vspace{-.3em}
\end{equation*}
\normalsize
from \eqref{eq:B_derivative_case_1} and \eqref{eq:safe_policy_cbf_case_2}. Similarly to the CBF-QP, we can incorporate this as a linear inequality constraint in a min-norm optimization based controller. When the input bound $\cset$ is polytopic, the optimization becomes a QP as well:\vspace{-.5em}

\HRule
\noindent \textbf{Robust \CBVF-QP}:
\small
\begin{subequations}
\label{eq:ftr-cbf-qp}
\begin{align}
& \pi_{QP}(x, t)= \underset{\ctrl\in \cset}{\argmin} \quad (\ctrl-\ctrl_{ref})^T (\ctrl-\ctrl_{ref})\\
& \text{s.t.}\quad  a(x,t) + D_{x}\BVg(x, t) \cdot q(x)\ctrl+ \gamma \BVg(x, t) \geq 0, \label{eq:ftr-cbf-qp-constraint}\\
& \text{where} \; a(x,t) = D_{t}\BVg(x, t) + D_{x}\BVg(x, t) \cdot p(x) \nonumber\\
& \quad  \quad \quad \quad \quad \quad + \min_{\dstb\in \dset} D_{x}\BVg(x, t) \cdot r(x)\dstb.
\end{align}
\end{subequations}
\HRule
\normalsize
\vspace{-.7em}

\noindent Note that a similar formulation is proposed in a previous work that introduces a concept of Robust CBF \cite{jankovic}. \vspace{-.5em}

\begin{proposition}
\label{prop:cbvf-qp-feasible}
For the Robust \CBVF~ $\BVg$, and for the system \eqref{eq:control-affine-sys} with linear control bound $\cset$, the Robust \CBVF-QP \eqref{eq:ftr-cbf-qp} is feasible everywhere $(x, t) \in \R^n \times (-\infty, 0]$ where the gradient $D_x \BVg(x, t)$ exists, and its solution is always an optimal policy with respect to $\BVg$. \vspace{-.5em}
\end{proposition}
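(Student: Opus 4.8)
The plan is to obtain both assertions directly from the \CBVF-VI \eqref{eq:HJI_VI_mr_cbf} together with the control/disturbance-affine structure \eqref{eq:control-affine-sys}; the argument is essentially a formalization of the reasoning already sketched around \eqref{eq:B_derivative_case_1}--\eqref{eq:l_derivative_B}. First I would fix a point $(x,t)$ at which $\BVg$ is differentiable — by the Lipschitz regularity from Theorem \ref{th:mr-cbf-hji-vi} and Rademacher's theorem this is a set of full Lebesgue measure, and it is precisely where the QP data $a(x,t)$ and $D_x\BVg(x,t)$ are defined — and note that there the Lipschitz viscosity solution $\BVg$ satisfies \eqref{eq:HJI_VI_mr_cbf} in the classical (pointwise) sense. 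Then I would record the elementary identity that, because $f(x,\ctrl,\dstb)=p(x)+q(x)\ctrl+r(x)\dstb$ and $\min_{\dstb\in\dset}$ distributes over the $\ctrl$-independent summands, the QP constraint \eqref{eq:ftr-cbf-qp-constraint} is \emph{literally}
\[
  D_t\BVg(x,t) + \min_{\dstb\in\dset} D_x\BVg(x,t)\cdot f(x,\ctrl,\dstb) + \gamma\BVg(x,t) \ge 0,
\]
now written affinely in $\ctrl$, i.e.\ exactly the condition defining $K_{\BVg}(x,t)$ in \eqref{eq:safe_policy_cbf_case_2} and appearing in \eqref{eq:B_derivative_case_1}.

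For feasibility and well-posedness I would use that the second branch of \eqref{eq:HJI_VI_mr_cbf} is nonnegative, so $D_t\BVg(x,t)+\max_{\ctrl\in\cset}\min_{\dstb\in\dset}D_x\BVg(x,t)\cdot f(x,\ctrl,\dstb)+\gamma\BVg(x,t)\ge 0$; since $\cset$ is compact the inner maximum is attained at some $\ctrl^\star\in\cset$, and by the identity above $\ctrl^\star$ satisfies \eqref{eq:ftr-cbf-qp-constraint}, so the feasible set is nonempty. Being the intersection of the polytope $\cset$ with one further half-space, that feasible set is a nonempty closed convex polyhedron, whence the strictly convex objective $(\ctrl-\ctrl_{ref})^T(\ctrl-\ctrl_{ref})$ has a unique minimizer $\pi_{QP}(x,t)$ and the program is genuinely a QP.

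For optimality I would split along the two-case characterization of $\pi^*_{\BVg}$, which is exhaustive because $\BVg\le\lf$ everywhere (take $s=t$ in \eqref{eq:mr-cbf-value}). If $\BVg(x,t)=\lf(x)$, then \eqref{eq:ftr-cbf-qp-constraint} is precisely the condition $\ctrl\in K_{\BVg}(x,t)$, so $\pi_{QP}(x,t)\in K_{\BVg}(x,t)$ is optimal by definition in this case. If $\BVg(x,t)<\lf(x)$, the VI forces its second branch to vanish, i.e.\ $-D_t\BVg(x,t)-\gamma\BVg(x,t)=\max_{\ctrl'\in\cset}\min_{\dstb\in\dset}D_x\BVg(x,t)\cdot f(x,\ctrl',\dstb)$; any QP-feasible $\ctrl$ then satisfies $\min_{\dstb\in\dset}D_x\BVg(x,t)\cdot f(x,\ctrl,\dstb)\ge\max_{\ctrl'\in\cset}\min_{\dstb\in\dset}D_x\BVg(x,t)\cdot f(x,\ctrl',\dstb)$, which with the trivial reverse inequality (valid for every $\ctrl\in\cset$) gives $\ctrl\in\argmax_{\ctrl'\in\cset}\min_{\dstb\in\dset}D_x\BVg(x,t)\cdot f(x,\ctrl',\dstb)=\pi^*_{\BVg}(x,t)$ as in \eqref{eq:safe_policy_cbf_case_1}. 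In particular $\pi_{QP}(x,t)$ is an optimal policy.

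I expect the only subtle step to be this last case: the constraint merely asks the worst-case directional derivative of $\BVg$ to be \emph{at least} $-D_t\BVg-\gamma\BVg$, and one has to notice that, since the VI already identifies that threshold with the maximum over $\cset$, feasibility is in fact equivalent to attaining the max, i.e.\ to membership in $\pi^*_{\BVg}(x,t)$. A minor technical caveat I would flag is that ``where the gradient $D_x\BVg$ exists'' should be read as joint differentiability of $\BVg$ at $(x,t)$; this is exactly the regime in which $a(x,t)$ is well defined and in which \eqref{eq:HJI_VI_mr_cbf} holds pointwise, and it is all the statement needs. Everything else is bookkeeping with the affine form of $f$.
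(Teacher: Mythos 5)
Your proposal is correct and follows essentially the same route as the paper's proof: both identify the QP constraint \eqref{eq:ftr-cbf-qp-constraint} with the condition defining $K_{\BVg}(x,t)$ via the affine structure of $f$, and then split on whether $\BVg(x,t)<\lf(x)$ (second branch of the VI vanishes, so feasibility forces attainment of the max and hence membership in \eqref{eq:safe_policy_cbf_case_1}) or $\BVg(x,t)=\lf(x)$ (feasible set equals $K_{\BVg}$). You additionally supply details the paper leaves implicit — exhaustiveness of the two cases from $\BVg\le\lf$, nonemptiness via attainment of the maximum over the compact set $\cset$, and uniqueness of the QP minimizer — which strengthens rather than changes the argument.
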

\begin{proof}
For the first case, when $\BVg(x, t) < \lf(x)$, the constraint of the QP \eqref{eq:ftr-cbf-qp-constraint} is satisfied but only under the equality condition since
\vspace{-1em}

\small
\begin{equation*}
    D_{t}\BVg(x, t) + \max_{\ctrl\in \cset} \min_{\dstb\in \dset} D_{x}\BVg(x, t) \cdot f(x, \ctrl, \dstb) + \gamma \BVg(x, t) = 0
\end{equation*}
\normalsize\vspace{-1em}

\noindent from \eqref{eq:B_derivative_case_1}. Any $\ctrl\in \cset$ that satisfies the equality condition is optimal. For the second case, when $\BVg(x, t) = \lf(x)$, $K_{\BVg}$ is exactly the feasible set of the Robust \CBVF-QP.
\end{proof}

\begin{remark}
Note that any reference control signal $u_{ref}$ can be used in \eqref{eq:ftr-cbf-qp}, since Proposition \ref{prop:cbvf-qp-feasible} holds for every feasible solution. Therefore, \eqref{eq:ftr-cbf-qp} is not only an optimal controller for $\BVg$, it also can be used as a safety filter for any kind of performance controller.
As we explained in Sec. \ref{subsec:onlinecontrol}, this new safety filter is much less restrictive than the original optimal control policy of $V$. Also, compared to applying a least-restrictive safety filter explained in Remark \ref{rm:limit_of_V} which utilizes value function only at the boundary, the filter \eqref{eq:ftr-cbf-qp} can be applied globally inside $\Safe(\tT)$, and the optimization automatically adjusts $u_{ref}$ to make it safe.
\end{remark}

\begin{remark}
When the differential $D_{x}\BVg$ does not exist, since $\BVg$ is Lipschitz continuous, either one of superdifferential or subdifferential always exists. The optimal control is determined by the same rule \eqref{eq:safe_policy_cbf_case_2} where the differential $D_x \BVg(x,t)$ is replaced by the superdifferential $D_x \phi (x,t) \in D_x \BVg^{+}(x, t)$ or subdifferential $D_x \phi (x,t) \in D_x \BVg^{-}(x, t)$ \cite[Ch.3.2.5]{bardi2008optimal}. 
\end{remark}\vspace{-1em}
\section{Numerical Examples \vspace{-.5em} \label{sec:simulations}}

In the following numerical examples, standard numerical methods for computing the reachability-based value functions \cite{ian2005levelset, fisac2015reach} are used to compute $B_\gamma$.

\subsection{Double Integrator Example}\vspace{-.5em}
The running example in this subsection will be a simple 2D double integrator. Its system dynamics are
$\dot{z} = v + d,$ $\dot{v} = \ctrl,$
with states position $z$ and velocity $v$,  disturbance $\dstb \in [-0.2,0.2]$ and  control $\ctrl\in[-0.5, 0.5]$.
Figure~\ref{fig:comparison_functions} shows a comparison of the functions, zero level sets, trajectories, and control signals for three different values of $\gamma$. On the top in orange is the standard HJ VI computation (i.e. $\gamma = 0$). The other rows show computations for $\gamma = 0.2$ (middle, blue), and $\gamma = 0.5$ (bottom, cyan). 
The new formulation is also robust to bounded disturbances. Figure~\ref{fig:comparison_disturbance} shows a comparison of trajectories  under different disturbance conditions. Even under worst-case disturbances (blue), the online trajectory is guaranteed to remain in the safe set. \vspace{-.8em}

\begin{figure}\centering
\includegraphics[width=\columnwidth]{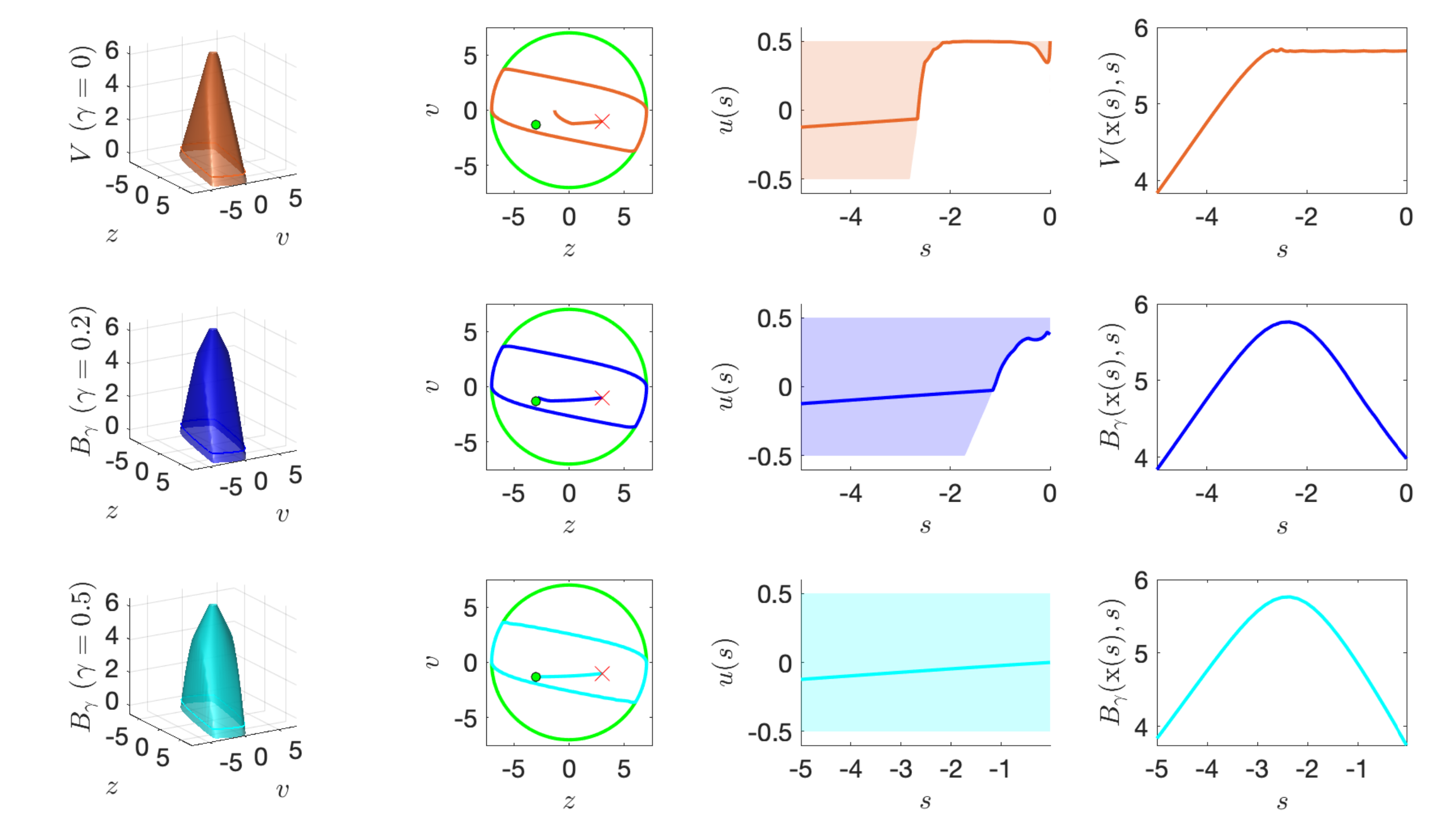}
\vspace{-2em}
\caption{
From left to right: 1. Comparison between $V(x, t)$ (top) and $\cbf(x, t)$ with $\gamma\!=\!0.2$ (middle) and $0.5$ (bottom), $t\!=\!-5$. Note that when $\gamma\!=\!0$, $\cbf(x, t)\!=\!V(x, t)$. 2. The optimal trajectories in the state space initiated at $x=[3,-1]$ (red cross) and the zero-level sets of $l(x)$ (green) and $B_\gamma(x, t)$. 3. The corresponding optimal control signals. The control is synthesized using the Robust-CBVF-QP, where $u_{ref}$ is a simple PD control for the target point (green dot), and the shaded regions indicate feasible solutions of the QP ($K_{\cbf}(\mathrm{x}(s), s)$). 4. Profiles of $B_\gamma$ along the trajectories. The optimal policy is less conservative with larger $\gamma$ (allowing $B_\gamma$ to decrease more) and is able to reach the target when $\gamma\!=\!0.5$.
}
\label{fig:comparison_functions}
\vspace{-.5em}
\end{figure}

\begin{figure}\centering
\includegraphics[width=.6\columnwidth]{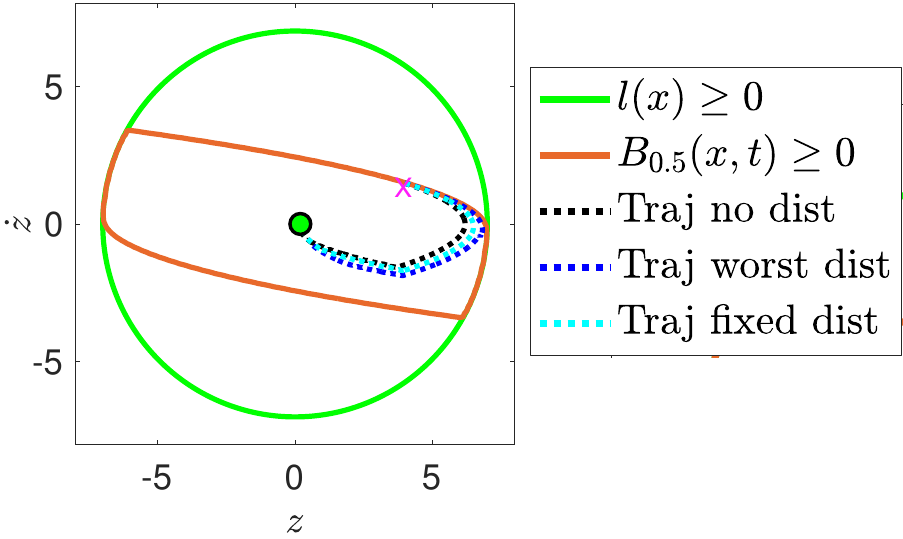} \vspace{-1em}
\caption{Trajectories under different online disturbance conditions. All trajectories start from $x=[4,1.5]$ (red cross). Conditions shown are no disturbance (black), a fixed disturbance of 0.1 m/s (cyan), and worst-case disturbance (blue). By starting in the safe set (orange boundary) the system remains within the constraint set (green boundary) even under worst-case disturbance.}
\label{fig:comparison_disturbance}
\vspace{-2em}
\end{figure}

\begin{figure}\centering
\includegraphics[width=.6\columnwidth]{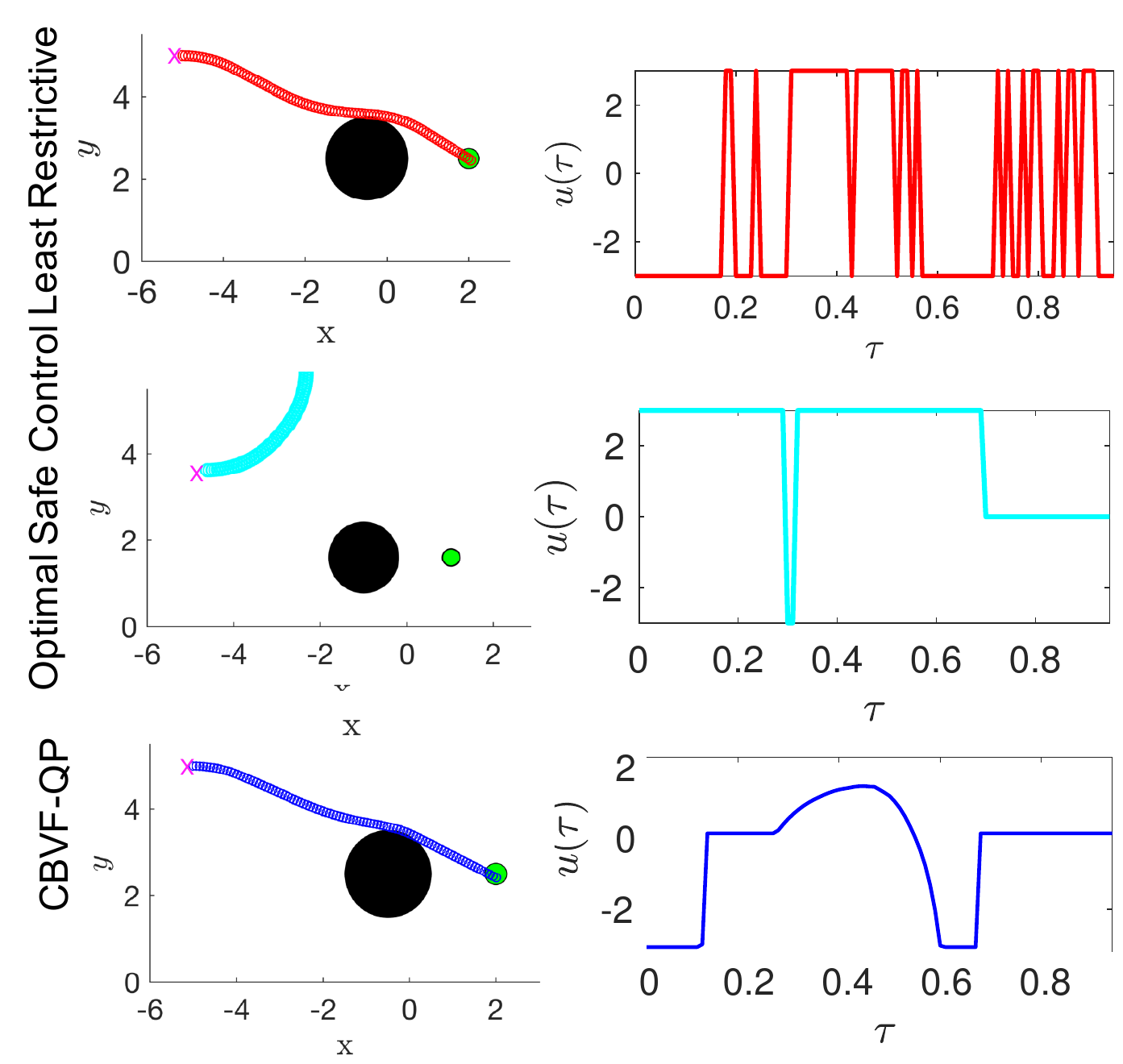}
\vspace{-1em}
\caption{Comparison between the least-restrictive controller (top), the optimal controller from the original HJ reachability (middle), and the \CBVF-QP controller (bottom).}
\label{fig:comparison_online}
\vspace{-1.2em}
\end{figure}

\begin{figure}\centering
\includegraphics[width=.8\columnwidth]{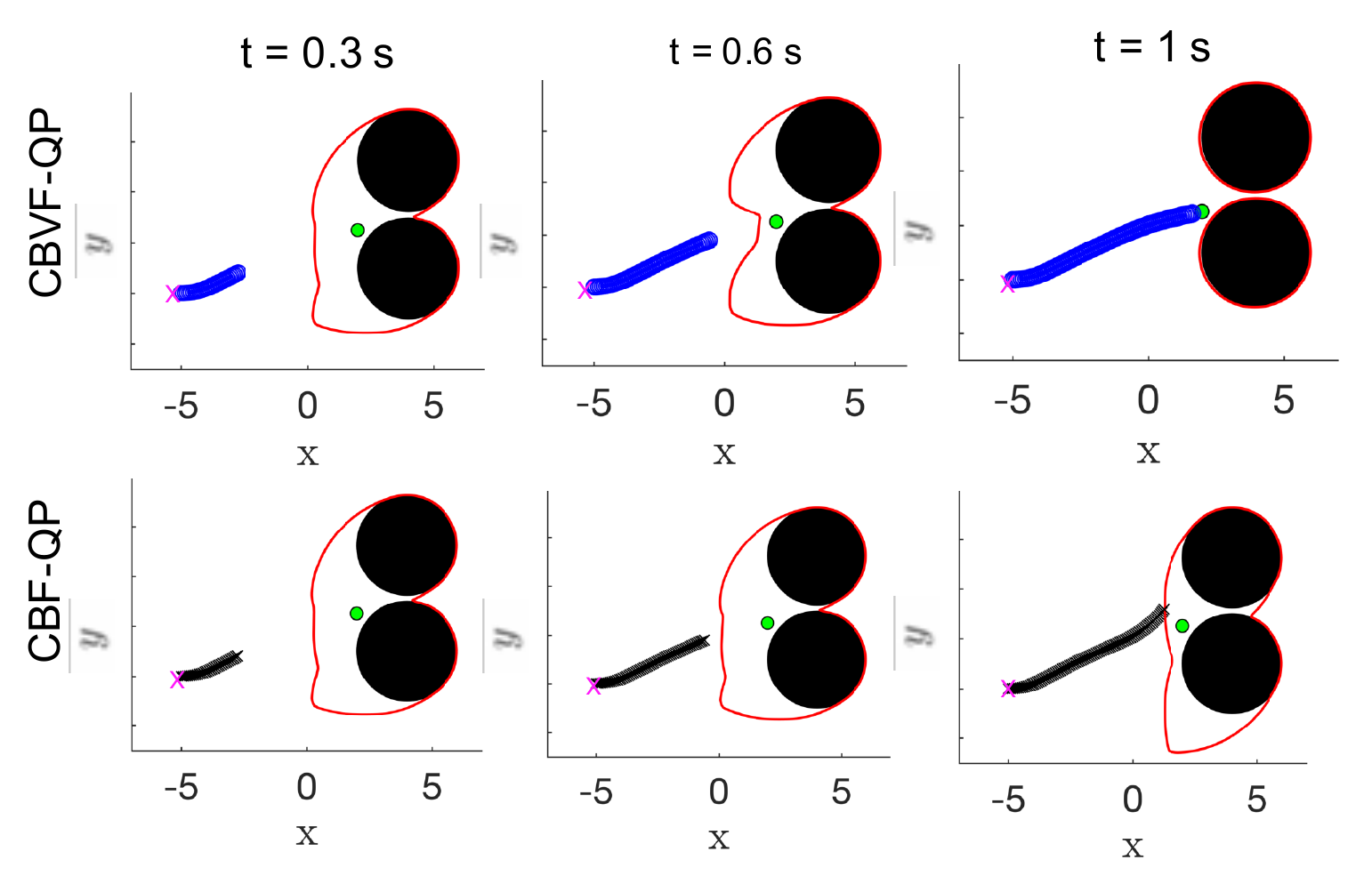}
\vspace{-1.5em}
\caption{Comparison of trajectories using a \CBVF-QP (top) vs. a CBF-QP (bottom).
The red set represents the boundary of the $x$-$y$ slice of the zero-superlevel sets of $\cbf$ and $B$ at current $\theta$. 
Note that these sets appear to rotate over time because we are visualizing the 2D slice at the current value of $\theta$. 
On the top, the system is able to reach a goal while avoiding obstacles within the prescribed time horizon.
On the bottom, the system must stay safe for an infinite time horizon, and is therefore unable to reach the goal (video: \url{https://youtu.be/wGg7rfyXCTs}).
}
\label{fig:comparison_time_varying}
\vspace{-1.8em}
\end{figure}

\subsection{Dubins Car Example}\vspace{-.3em}

In this subsection we demonstrate a comparison between using the original reachability-based controllers and the \CBVF-QP, and a comparison between the CBF-QP and the \CBVF-QP. We use a Dubins car model:
$\dot{x} = v\cos(\theta),$ $\dot{y} = v\sin(\theta),$ $\dot{\theta} = \ctrl,$
where $x,y$ are positions, $\theta$ is heading, $v$ is a fixed speed, and $\ctrl\in[-3,3]$ is rotational velocity. We use $\gamma=10$.
In Fig.~\ref{fig:comparison_online}, the system navigates around an obstacle to a goal using least-restrictive control (top) and the \CBVF-QP (bottom). The \CBVF-QP is able to use a smoother control signal and still reach the goal within the time horizon.

In Fig.~\ref{fig:comparison_time_varying}, the time stamps are shown for a system using a \CBVF-QP (which is time-varying) and a CBF-QP (which is time-invariant) controller. For the CBF, $B=\VV_{\infty}$ is used to maximize its safe set $\mathcal{C}$ as $\Safe_{\infty}$. Although $\VV_{\infty}$ has non-differentiable points for the Dubins car system in general, the trajectory resulting from the CBF-QP in Fig.~\ref{fig:comparison_time_varying} does not intersect with such points. The CBF-QP maintains safety, however, because of its safety concern for infinite-time horizon, the system is unable to reach the goal. In contrast, the time-varying \CBVF-QP allows the system to safely reach the goal within the finite-time horizon. 
This formulation can be used for scenarios that require safety only for a fixed time \cite{fixed_time_bf, ohnishi2021constraint}, for example, a hybrid system like legged robots that requires the system to stay safe only until it reaches the goal.\vspace{-.5em}

\section{Conclusion}
\label{sec:conclusion}
This paper has introduced the notion of a Control Barrier-Value Function (\CBVF) by unifying ideas from HJ reachability and Control Barrier Functions. To the best of our knowledge, this is the first constructive method for the CBF community that provides the maximal safe set for a desired safety constraint which also can handle bounded control and disturbances, however, this comes with a cost of bearing the curse of dimensionality. We also introduce the Robust \CBVF-QP for online control and demonstrate its usage as a safety filter in a double-integrator and Dubins car system. This provides a new systematic way of designing the safety filter for the reachability community. We believe the introduction of CBVFs is an important step towards bridging the gap between the CBF-based and reachability-based safety control frameworks. We plan to extend this analysis to Control Lyapunov Functions, similarly to \cite{clf_zubov}, and to reach-avoid problems \cite{fisac2015reach}.

\section*{Acknowledgements}
We would like to thank Zhichao Li, Somil Bansal, Andrea Bajcsy, and Jaime Fisac for the insightful discussions.

\appendix

The following \textbf{proofs of Theorem \ref{th:cbf-dynamic-principle} and \ref{th:mr-cbf-hji-vi}} inherit the structure from the standard proof of viscosity solution of HJI Partial Differential Equation (HJI-PDE) \cite{evans_hj}.
Note that the proofs hold for compact $U, D$ without convexity condition. Here, we use notation $\traj_{x, t}^{u,d}\equiv\state:[t, 0]\!\rightarrow\!\R^n$, where $\state(\cdot)$ solves \eqref{eq:system} for $(x, t, u, d)$, instead of $\state(\cdot)$, to specify control and disturbance signal. We use $\Strategy_{t} := \Strategy_{[t, 0]}$, $\UU_{t}:=\UU_{[t, 0]}$.

\begin{proof} \textbf{(Proof of Theorem \ref{th:cbf-dynamic-principle})}

Define $W(x, t)$ as the right hand side of \eqref{eq:cbf-dp}. For $\forall \epsilon > 0$, $\exists \eta\in\Strategy_t$ such that $\forall u\in\UU_t$ \vspace{-1em}

\small
\begin{align}
W(x, t) \ge \min \Bigl\{ \inf_{s\in[t, t+\delta]} e^{\gamma(s-t)} l(\traj_{x,t}^{u,\eta[u]}(s)), \Bigr. \nonumber & \\ 
\left. e^{\gamma \delta}  \cbf(\traj_{x,t}^{u,\eta[u]}(\tpdelta),\tpdelta)\right\}& -\epsilon \label{eq:dp-cond1}
\end{align}
\normalsize

\noindent From the definition of $\cbf$, for each $y\in\R^n$,\vspace{-1em}

\small
\begin{equation*}
    \cbf(y, \tpdelta)= \inf_{\strategy\in\Strategy_{t+\delta}} \sup_{u\in\UU_{t+\delta}} \inf_{s\in[\tpdelta,0]} e^{\gamma(s-(\tpdelta))} l(\traj_{y,\tpdelta}^{u,\strategy[u]}(s)).
\end{equation*}
\normalsize

\noindent Therefore, $\forall \epsilon_1 > 0$, $\exists \eta_y \in\Strategy_{\tpdelta}$ such that for $\forall u\in\UU_{\tpdelta}$ \vspace{-1em}

\small
\begin{equation}
\label{eq:dp-cond2}
    \cbf(y, \tpdelta) \ge \inf_{s\in[\tpdelta,0]} e^{\gamma(s-(\tpdelta))}  l(\traj_{y,\tpdelta}^{u,\eta_{y}[u]}(s)) - \epsilon_1.
\end{equation}
\normalsize

\noindent With $y := \traj_{x, t}^{u, \eta[u]}(\tpdelta)$, define \vspace{-0.5em}

\small
\begin{equation*}
    \strategy[u] := \begin{cases} \eta[u](s) &\mbox{for}\quad t\le s \le \tpdelta \\ \eta_{y}[u](s) &\mbox{for}\quad \tpdelta < s \le 0\end{cases}
\end{equation*}
\normalsize

\noindent Then, from \eqref{eq:dp-cond1} and \eqref{eq:dp-cond2}, for $\forall u \in \UU_t$, \vspace{-1em}

\small
\begin{align*}
W(x, t)\!\ge\!\inf_{s\in[t,0]} e^{\gamma(s-t)} l(\traj_{x,t}^{u,\strategy[u]}(s))\!-\!2 \epsilon \;\; \text{by taking }\epsilon_1\!=\!e^{-\gamma\delta}\epsilon.
\end{align*}
\normalsize

\noindent Therefore, \vspace{-.5em}
\small
\begin{equation}
\label{eq:dp-cond3}
W(x, t) \ge \cbf(x, t) - 2 \epsilon\quad \forall \epsilon > 0.
\end{equation}
\normalsize

On the other hand, by definitions of $\cbf$ and $W$, for $\forall \epsilon > 0$, $\exists \eta \in \Strategy_t$ such that for $\forall u \in \UU_t$ \vspace{-1em}

\small
\begin{flalign}
    & \inf_{s\in[t, 0]} e^{\gamma(s-t)} l(\traj_{x,t}^{u,\eta[u]}(s)) \le \cbf(x, t) + \epsilon \label{eq:dp-cond4}
\end{flalign}
\vspace{-1em}
\begin{alignat}{2}
    & W(x, t) \le \sup_{u\in\UU_t} \min \Bigl\{ && \inf_{s\in[t, t+\delta]} e^{\gamma(s-t)} l(\traj_{x,t}^{u,\eta[u]}(s)), \Bigr. \nonumber \\ 
& && \left. e^{\gamma \delta}  \cbf(\traj_{x,t}^{u,\eta[u]}(\tpdelta),\tpdelta)\right\}, \nonumber
\vspace{-.5em}
\end{alignat}
\normalsize
Therefore, $\exists u_0\in \UU_t$ such that
\small
\begin{align}
    W(x, t) \le \min \Bigl\{ \inf_{s\in[t, t+\delta]} e^{\gamma(s-t)} l(\traj_{x,t}^{u_0,\eta[u_0]}(s)), \Bigr. \nonumber & \\ 
\left. e^{\gamma \delta}  \cbf(\traj_{x,t}^{u_0,\eta[u_0]}(\tpdelta),\tpdelta)\right\}&\!+ \!\epsilon \label{eq:dp-cond5}
\vspace{-1em}
\end{align}
\normalsize

\noindent For $\forall u \in \UU_{\tpdelta}$, define \vspace{-.5em}

\small
\begin{equation*}
    \bar{u}(s) := \begin{cases} u_0(s) &\mbox{for}\quad t\le s \le \tpdelta \\ u(s) &\mbox{for}\quad \tpdelta < s \le 0,\end{cases}
\end{equation*}
\normalsize

\noindent and define $\bar{\eta}\in\Strategy_{\tpdelta}$ by $\bar{\eta}[u](s)=\eta[\bar{u}](s)$ for $s\in[\tpdelta, 0]$. Then, with $y:=\traj_{x, t}^{\bar{u}, \eta[\bar{u}]}(\tpdelta)$, by definition of $\cbf$, \vspace{-1em}

\small
\begin{equation*}
    \cbf(y, \tpdelta)\!\le\!\sup_{u\in\UU_{t+\delta}} \inf_{s\in[\tpdelta,0]} e^{\gamma(s-(\tpdelta))}  l(\traj_{y,\tpdelta}^{u,\bar{\eta}[u]}(s)).
\end{equation*}
\normalsize

\noindent Therefore, $\forall \epsilon_2 > 0$, $\exists u_1 \in \UU_{\tpdelta}$ such that \vspace{-1em}

\small
\begin{equation}
\label{eq:dp-cond6}
    \cbf(y, \tpdelta)\!\le\! \inf_{s\in[\tpdelta,0]} e^{\gamma(s-(\tpdelta))}  l(\traj_{y,\tpdelta}^{u_1,\bar{\eta}[u_1]}(s)) + \epsilon_2.
\end{equation}
\normalsize

\noindent Selecting $\bar{u}\in\UU_t$ with $u_1 \in \UU_{\tpdelta}$, \eqref{eq:dp-cond5} and \eqref{eq:dp-cond6} yields \vspace{-1em}

\small
\begin{align*}
W(x, t) 
\le \inf_{s\in[t,0]} e^{\gamma(s-t)} l(\traj_{x,t}^{u,\eta[u]}(s))\!+\!2 \epsilon \;\; \text{by taking }\epsilon_2\!=\!e^{-\gamma\delta}\epsilon.
\end{align*}
\normalsize

\noindent Therefore, from \eqref{eq:dp-cond4}, \vspace{-.5em}
\small
\begin{equation}
\label{eq:dp-cond7}
W(x, t) \le \cbf(x, t) + 3 \epsilon\quad \forall \epsilon > 0.
\vspace{-.5em}
\end{equation}
\normalsize

\noindent The proof is done from \eqref{eq:dp-cond3} and \eqref{eq:dp-cond7}.
\end{proof} 

\noindent \textit{Proof.} \textbf{(Proof of Theorem \ref{th:mr-cbf-hji-vi})}

According to the definition of viscosity solution \cite{bardi2008optimal}, Theorem \ref{th:mr-cbf-hji-vi} is equivalent to $\cbf$ satisfying the following statements.

\noindent 1) For $\forall \phi (x, t) \in C^{1} (\R^n \times (-\infty, 0])$ such that $\cbf-\phi$ has a \textit{local maximum} 0 at $(\xntn)\in \R^n \times (-\infty, 0]$,
\vspace{-1em}

\small
\begin{align}
    & 0 \le \min \biggl\{l(x_0) - \phi(\xntn),\biggr. \label{eq:cbf-viscosity-1} \\ 
    & \left. D_{t}\phi(\xntn) \!+\! \max_{u\in U} \min_{d\in D} D_{x}\phi(\xntn) \cdot f(x_0, u, d) \!+\! \gamma\phi(\xntn)\right\}. \nonumber
\end{align}
\normalsize

\noindent 2) For $\forall \phi (x, t) \in C^{1} (\R^n \times (-\infty, 0])$ such that $\cbf-\phi$ has a \textit{local minimum} 0 at $(\xntn)\in \R^n \times (-\infty, 0]$,
\vspace{-1em}

\small
\begin{align}
    & 0 \ge \min \biggl\{l(x_0) - \phi(\xntn),\biggr. \label{eq:cbf-viscosity-2} \\ 
    & \left. D_{t}\phi(\xntn) \!+\! \max_{u\in U} \min_{d\in D} D_{x}\phi(\xntn) \cdot f(x_0, u, d) \!+\! \gamma\phi(\xntn)\right\}. \nonumber
\end{align}
\normalsize

\vspace{-.5em}

We use the following lemma to prove 1) and 2).
\begin{lemma}
\label{lm:apdx_1}
For $\phi (x, t) \in C^{1} (\R^n \times (-\infty, 0])$, define
\vspace{-1em}

\small
\begin{equation}
    \Lambda_{\phi}(x, t, u, d)\!:=\!D_t \phi(x, t) + \!D_x \phi(x, t)\cdot\!f(x, u, d)+\!\gamma \phi(x, t).
\end{equation}\normalsize

\noindent (a) If $\exists \theta >0$, $\exists (x_0, t_0) \in \R^n \times (-\infty, 0]$ such that $\max_{u\in U} \min_{d \in D}\Lambda_{\phi}(\xntn, u, d) \le -\theta$, there exists a small enough $\delta>0$, $\exists \strategy \in \Strategy_{t_0}$ such that $\forall u\in \UU_{t_0}$,
\small
\begin{equation}
\label{eq:lm_1_1}
    e^{\gamma\delta}  \phi(\traj_{\xntn}^{u,\strategy[u]}(\topdelta), \topdelta)-\phi(\xntn)\!\le\!- \frac{\theta}{2} \delta.
\end{equation}
\normalsize

\noindent (b) If $\exists \theta >0$, $\exists (x_0, t_0) \in \R^n \times (-\infty, 0]$ such that $\max_{u\in U} \min_{d \in D}\Lambda_{\phi}(\xntn, u, d) \ge \theta$, there exists a small enough $\delta>0$, $\forall \strategy \in \Strategy_{t_0}$, $\exists u \in \UU_{t_0}$ such that
\small
\begin{equation}
\label{eq:lm_1_2}
    e^{\gamma\delta}  \phi(\traj_{\xntn}^{u,\strategy[u]}(\topdelta), \topdelta)\!-\!\phi(\xntn) \ge \frac{\theta}{2} \delta.
\end{equation}
\end{lemma}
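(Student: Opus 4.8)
The plan is to reduce both parts to a single exact identity — the fundamental theorem of calculus applied to the discounted quantity $e^{\gamma(s-t_0)}\phi$ along a trajectory — and then let the disturbance (for (a)) or the control (for (b)) play a \emph{pointwise optimal choice computed at the frozen point} $(\xntn)$, absorbing the resulting error by continuity of $\Lambda_\phi$ and smallness of $\delta$. The case $t_0=0$ is vacuous (no admissible $\delta>0$), so I take $t_0<0$ throughout.

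First I would fix $\delta_0>0$, a constant $M$ with $\|f\|\le M$, and a modulus of continuity $\omega$ of $\Lambda_\phi$ on the compact set $\{x:|x-x_0|\le M\delta_0\}\times[t_0,0]\times U\times D$ (continuous since $\phi\in C^1$ and $f$ is continuous). For any admissible $u(\cdot),d(\cdot)$ and $\traj:=\traj_{\xntn}^{u,d}$, $|\dot\traj|\le M$ gives $|\traj(s)-x_0|\le M(s-t_0)$, and $s\mapsto e^{\gamma(s-t_0)}\phi(\traj(s),s)$ is Lipschitz, so the chain rule yields for a.e.\ $s$
\[
\tfrac{d}{ds}\!\left(e^{\gamma(s-t_0)}\phi(\traj(s),s)\right)=e^{\gamma(s-t_0)}\,\Lambda_\phi(\traj(s),s,u(s),d(s)),
\]
hence, integrating over $[t_0,\topdelta]$,
\[
e^{\gamma\delta}\phi(\traj(\topdelta),\topdelta)-\phi(\xntn)=\int_{t_0}^{\topdelta}e^{\gamma(s-t_0)}\,\Lambda_\phi(\traj(s),s,u(s),d(s))\,ds .
\]
I would then restrict to $\delta$ small enough that $M\delta\le\delta_0$, $t_0+\delta\le0$, and $\omega((M+1)\delta)\le\theta/2$; this threshold depends only on $\theta,M,\omega$, not on $u$.

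For (a): $\max_{u\in U}\min_{d\in D}\Lambda_\phi(\xntn,u,d)\le-\theta$ means $\min_{d\in D}\Lambda_\phi(\xntn,v,d)\le-\theta$ for every $v\in U$, so a measurable selection theorem gives a Borel map $\sigma:U\to D$ with $\Lambda_\phi(\xntn,v,\sigma(v))\le-\theta$. Define $\strategy\in\Strategy_{t_0}$ by $\strategy[u](s):=\sigma(u(s))$ on $[t_0,\topdelta]$ (arbitrary afterward); this is measurable in $s$ and nonanticipative since $\strategy[u](s)$ depends on $u$ only through $u(s)$. Along $\traj:=\traj_{\xntn}^{u,\strategy[u]}$, using $|\traj(s)-x_0|+|s-t_0|\le(M+1)\delta$,
\[
\Lambda_\phi(\traj(s),s,u(s),\strategy[u](s))\le\Lambda_\phi(\xntn,u(s),\sigma(u(s)))+\omega((M+1)\delta)\le-\tfrac{\theta}{2}<0 ;
\]
since $e^{\gamma(s-t_0)}\ge1$ ($\gamma\ge0$) and the integrand is negative, the identity gives $e^{\gamma\delta}\phi(\traj(\topdelta),\topdelta)-\phi(\xntn)\le-\tfrac{\theta}{2}\delta$, i.e.\ \eqref{eq:lm_1_1}. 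Part (b) is dual and easier: $u\mapsto\min_{d\in D}\Lambda_\phi(\xntn,u,d)$ is continuous and $U$ is compact, so fix $u^\ast\in U$ with $\min_{d\in D}\Lambda_\phi(\xntn,u^\ast,d)\ge\theta$; then for any $\strategy\in\Strategy_{t_0}$ the constant control $u\equiv u^\ast$ forces $\Lambda_\phi(\xntn,u^\ast,\strategy[u](s))\ge\theta$ for every $s$, whence $\Lambda_\phi(\traj(s),s,u^\ast,\strategy[u](s))\ge\theta/2>0$ and the same identity gives $e^{\gamma\delta}\phi(\traj(\topdelta),\topdelta)-\phi(\xntn)\ge\tfrac{\theta}{2}\delta$, i.e.\ \eqref{eq:lm_1_2}.

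The one genuinely delicate step is the Borel measurable selection $\sigma$ in (a) — equivalently, exhibiting a bona fide nonanticipative strategy that realizes the inner minimization pointwise; everything else is the chain rule plus uniform continuity of $\Lambda_\phi$ on a compact set shrinking to $(\xntn)$ as $\delta\downarrow0$. I would emphasize that evaluating the optimizing choice at the fixed point $(\xntn)$, rather than along the moving trajectory, is precisely what sidesteps any implicit-feedback or fixed-point difficulty in defining $\strategy$, at the harmless cost of the error term $\omega((M+1)\delta)$.
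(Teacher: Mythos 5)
Your proof is correct and is essentially the same argument as the one the paper relies on: the paper does not prove Lemma~\ref{lm:apdx_1} itself but defers to \cite[Lemma 4.3]{evans_hj}, whose proof is exactly your scheme of integrating $\frac{d}{ds}\bigl(e^{\gamma(s-t_0)}\phi(\traj(s),s)\bigr)=e^{\gamma(s-t_0)}\Lambda_\phi(\traj(s),s,u(s),d(s))$, making the optimal selection at the frozen point $(\xntn)$ (a Borel map $\sigma:U\to D$ inducing a nonanticipative strategy in (a), a constant control $u^\ast$ in (b)), and absorbing the error via uniform continuity of $\Lambda_\phi$ for small $\delta$. The only modification required for the CBVF setting is folding the $\gamma\phi$ term into $\Lambda_\phi$ so the product rule yields the discounted increment on the left of \eqref{eq:lm_1_1}--\eqref{eq:lm_1_2}, which you handle correctly (including the observation $e^{\gamma(s-t_0)}\ge 1$ needed to keep the $\tfrac{\theta}{2}\delta$ bound).
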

\normalsize

\noindent Lemma \ref{lm:apdx_1} is a modification of \cite[Lemma 4.3.]{evans_hj} for general HJI-PDE to \CBVF-VI. For its proof, please refer to \cite{evans_hj}.

\vspace{0.5em}
\noindent \textit{Proof of 1).} Let \eqref{eq:cbf-viscosity-1} be false. Then one of the followings should hold. \vspace{-0.5em}
\small
\begin{subequations}
\begin{align}
    \exists \theta_1 > 0\;\text{s.t.}\;\; & l(x_0) - \phi(\xntn) \le -\theta_1 \label{eq:main-1-1}\\
    \exists \theta_2 > 0\;\text{s.t.}\;\; & D_t \phi(\xntn)\!+\!\max_{u\in U} \min_{d\in D} D_x \phi(\xntn)\!\cdot\!f(x_0, u, d)  \nonumber \\
    & + \gamma \phi(\xntn) \le -\theta_2 \label{eq:main-1-2}
\end{align}
\end{subequations}
\normalsize

If \eqref{eq:main-1-1} is true, by continuity of $l$ in the state and $\traj$ in time, $\exists \delta > 0$ such that for all $u \in \UU_{t_0}$, $\strategy \in \Strategy_{t_0}$, $s\in[t_0, t_0 + \delta]$, \vspace{-.5em}
\small
\begin{equation*}
    \left| e^{\gamma(s-t_0)} l(\traj_{\xntn}^{u, \strategy[u]}(s)) - l(x_0) \right| \le \frac{\theta_1}{2}. \vspace{-.5em}
\end{equation*}
\begin{align*}
    \Rightarrow e^{\gamma(s-t_0)} l(\traj_{\xntn}^{u, \strategy[u]}(s)) & \le l(x_0) + \frac{\theta_1}{2} \\
    & \le \phi(\xntn)\!-\!\frac{\theta_1}{2} = \cbf(\xntn)\!-\!\frac{\theta_1}{2}.
\end{align*}

\normalsize
\noindent Plugging this into the dynamic programming principle \eqref{eq:cbf-dp},
\small
\begin{align*}
    \cbf(\xntn) \le & \inf_{\strategy\in\Strategy_{t_0}} \sup_{u\in\UU_{t_0}} \inf_{s \in [t_0,t_0+\delta]} e^{\gamma(s-t_0)}  l(\traj_{\xntn}^{u, \strategy[u]}(s)) \\
    \le & \;\; \cbf(\xntn) - \frac{\theta_1}{2}.
\end{align*}
\normalsize
This is a contradiction, therefore, \eqref{eq:main-1-1} is false.

Next, if \eqref{eq:main-1-2} is true, from Lemma \ref{lm:apdx_1}.a, for small enough $\delta > 0$, $\exists \eta \in \Strategy_{t_0}$ such that for all $u\in \UU_{t_0}$,
\small
\begin{equation*}
    e^{\gamma\delta}  \phi(\traj_{\xntn}^{u,\eta[u]}(\topdelta), \topdelta)-\phi(\xntn)\!\le\!- \frac{\theta_2}{2} \delta.
\end{equation*}
\normalsize
Since $\cbf-\phi$ has local maximum 0 at $(\xntn)$,
\small
\begin{equation*}
    \cbf(\traj_{\xntn}^{u,\eta[u]}(t_{0}\!+\!\delta),\topdelta) - \phi(\traj_{\xntn}^{u,\eta[u]}(\topdelta),\topdelta) \le 0. \vspace{-1em}
\end{equation*}
\begin{align*}
    \Rightarrow \! e^{\gamma\delta} \cbf(\traj_{\xntn}^{u,\eta[u]}&(\topdelta),\topdelta) \le e^{\gamma\delta} \phi(\traj_{\xntn}^{u,\eta[u]}(\topdelta),\topdelta) \nonumber \\
    &\le \phi(\xntn) - \frac{\theta_2}{2}\delta \nonumber = \cbf(\xntn)  - \frac{\theta_2}{2}\delta.
\end{align*}
\normalsize
Finally, from \eqref{eq:cbf-dp}, we get, \vspace{-1em}

\small
\begin{alignat*}{2}
\cbf(\xntn) \le & \sup_{u\in\UU_{t_0}} \min \Bigl\{\inf_{s\in[t_0, \topdelta]} e^{\gamma(s-t_0)} l(\traj_{\xntn}^{u,\eta[u]}(s)), \Bigr.&& \\ 
& \quad \quad \quad \quad \quad \left. e^{\gamma \delta}  \cbf(\traj_{\xntn}^{u,\eta[u]}(\topdelta),\topdelta) \right\}&& \\
\le & \cbf(\xntn) - \frac{\theta_2}{2}\delta,
\end{alignat*}
\normalsize

\noindent which is a contradiction. Therefore, \eqref{eq:main-1-2} is false. \qed
\vspace{.5em}
\noindent \textit{Proof of 2).} Let \eqref{eq:cbf-viscosity-2} be false. Then both of the followings should hold. \vspace{-0.5em}
\small
\begin{subequations}
\begin{align}
    \exists \theta_1 > 0\;\text{s.t.}\;\; & l(x_0) - \phi(\xntn) \ge \theta_1 \label{eq:main-2-1}\\
    \exists \theta_2 > 0\;\text{s.t.}\;\; & D_t \phi(\xntn)\!+\!\max_{u\in U} \min_{d\in D} D_x \phi(\xntn)\!\cdot\!f(x_0, u, d)  \nonumber \\
    & + \gamma \phi(\xntn) \ge \theta_2 \label{eq:main-2-2} \vspace{-0.5em}
\end{align}
\end{subequations}
\normalsize

From \eqref{eq:main-2-1}, by continuity of $l$ and $\traj$, $\exists \delta_1 > 0$ such that for all $u \in \UU_{t_0}$, $\strategy \in \Strategy_{t_0}$, $s\in[t_0, t_0 + \delta_1]$, \vspace{-0.5em}
\small
\begin{equation*}
    \left| e^{\gamma(s-t_0)}  l(\traj_{\xntn}^{u, \strategy[u]}(s)) - l(x_0) \right| \le \frac{\theta_1}{2}.
    \vspace{-0.5em}
\end{equation*}
\begin{align}
    \Rightarrow e^{\gamma(s-t_0)} l(\traj_{\xntn}^{u, \strategy[u]}(s)) & \ge l(x_0) - \frac{\theta_1}{2} \ge \phi(\xntn)\!+\!\frac{\theta_1}{2} \nonumber \\
    & = \cbf(\xntn)\!+\!\frac{\theta_1}{2}. \label{eq:main-2-cond1}
\end{align}

\normalsize
\noindent From \eqref{eq:main-2-2}, by Lemma \ref{lm:apdx_1}.b, for small enough $\delta_2 > 0$, $\forall \strategy \in \Strategy_{t_0}$, $\exists u_2 \in \UU_{t_0}$ such that 
\small
\begin{equation}
    e^{\gamma\delta_2}  \phi(\traj_{\xntn}^{u_2,\strategy[u_2]}(t_0\!+\!\delta_2), t_0\!+\!\delta_2)\!-\!\phi(\xntn) \ge \frac{\theta_2}{2} \delta_2.
\end{equation}
\normalsize

\noindent Since $\cbf-\phi$ has a local minimum 0 at $(\xntn)$, \vspace{-1em}

\small
\begin{align}
    e^{\gamma\delta_2} \cbf(&\traj_{\xntn}^{u_2,\strategy[u]}  (t_0\!+\!\delta_2), t_0\!+\!\delta_2) \nonumber \\ 
    &\ge e^{\gamma\delta_2} \phi(\traj_{\xntn}^{u_2,\strategy[u]}(t_0\!+\!\delta_2),t_0\!+\!\delta_2) \nonumber \\
    &\ge \phi(\xntn) + \frac{\theta_2}{2}\delta_2 = B(\xntn) + \frac{\theta_2}{2}\delta_2. \label{eq:main-2-cond2}
\end{align}
\normalsize

\noindent Take $\delta = \min(\delta_1, \delta_2)$ and plugging \eqref{eq:main-2-cond1} and \eqref{eq:main-2-cond2} into \eqref{eq:cbf-dp}, \vspace{-1em}

\small
\begin{alignat*}{2}
\cbf(\xntn) \ge & \inf_{\strategy\in\Strategy_{t_0}} \min \Bigl\{\inf_{s\in[t_0, \topdelta]} e^{\gamma(s-t_0)} l(\traj_{\xntn}^{u_2,\strategy[u_2]}(s)), \Bigr.&& \\ 
& \quad \quad \quad \quad \quad \left. e^{\gamma \delta}  \cbf(\traj_{\xntn}^{u_2,\strategy[u_2]}(\topdelta),\topdelta) \right\}&& \\
\ge & \;\; \cbf(\xntn) +\min \left\{ \frac{\theta_2}{2}\delta, \frac{\theta_1}{2} \right\}.
\end{alignat*}
\normalsize

\noindent This is a contradiction. \qed

Note that since $\cbf$ satisfies both 1) and 2), uniqueness and Lipschitz continuity of $\cbf$ can be derived similarly to \cite[Th.4.2]{barron1989bellman} and \cite[Th.3.2.]{evans_hj}, respectively.\vspace{-.5em}





\bibliographystyle{./IEEEtran} 
\bibliography{./IEEEabrv,./IEEEexample}

\end{document}